\renewcommand{\phi}{\varphi}
\renewcommand{\ker}{\Ker}
\renewcommand{\Re}{\operatorname{Re}}
\newcommand{\bb}[1]{\mathbb{#1}}
\newcommand{\mc}[1]{\mathcal{#1}}
\newcommand{\mf}[1]{\mathfrak{#1}}
\newcommand{\mb}[1]{\mathbb{#1}}
\newcommand{\mbbm}[1]{\mathbbm{#1}}
\newcommand{\beq}{\begin{equation}}
\newcommand{\eeq}{\end{equation}}
\newcommand{\e}{\varepsilon}
\DeclareMathOperator{\at}{at}
\DeclareMathOperator{\Mat}{Mat}
\DeclareMathOperator{\diag}{diag}
\DeclareMathOperator{\tr}{Tr}
\DeclareMathOperator{\res}{Res}
\DeclareMathOperator{\ad}{ad}
\DeclareMathOperator{\Ker}{Ker}
\DeclareMathOperator{\rank}{rank}
\theoremstyle{plain}
\newtheorem{theorem}{Theorem}[section]
\newtheorem{lemma}[theorem]{Lemma}
\newtheorem{proposition}[theorem]{Proposition}
\newtheorem{corollary}[theorem]{Corollary}
\theoremstyle{definition}
\newtheorem{definition}[theorem]{Definition}
\newtheorem{example}[theorem]{Example}
\theoremstyle{remark}
\newtheorem{remark}[theorem]{Remark}
\numberwithin{equation}{section}
\definecolor{light}{gray}{.9}
\tikzset{node distance=2em, ch/.style={circle,draw,on chain,inner sep=2pt},chj/.style={ch,join},every path/.style={shorten >=4pt,shorten <=4pt},line width=1pt,baseline=-1ex}
\let\dlabel=\alabel
\newcommand{\dnode}[2][chj]{%
\node[#1,label={below:\dlabel{#2}}] {};
}
\newcommand{\dnodebr}[1]{%
\node[chj,label={below right:\dlabel{#1}}] {};
}
\newcommand{\dydots}{%
\node[chj,draw=none,inner sep=1pt] {\dots};
}
\title{Bethe Ansatz and the Spectral Theory of
affine Lie algebra--valued connections. \\ The simply--laced case}
\author{Davide Masoero, Andrea Raimondo, Daniele Valeri}
\address{Grupo de F\'isica Matem\'atica da Universidade de Lisboa,
Av. Prof. Gama Pinto 2, 1649-003 Lisboa, Portugal.}
\email{dmasoero@gmail.com}
\address{Dipartimento di Matematica e Applicazioni, Universit\`a degli Studi di Milano-Bicocca,
Via Cozzi 53, 20125 Milano, Italy.}
\email{andrea.raimondo@unimib.it}
\address{Yau Mathematical Sciences Center, Tsinghua University, 100084 Beijing, China.}
\email{daniele@math.tsinghua.edu.cn}
\begin{document}

\pagestyle{plain}

\begin{abstract}
We study the ODE/IM correspondence for ODE associated to
$\widehat{\mf g}$-valued connections, for a simply-laced Lie algebra $\mf g$.
We prove that subdominant solutions to the ODE defined
in different fundamental
representations satisfy a set of quadratic equations
called $\Psi$-system.
This allows us to show that the
generalized spectral determinants satisfy the 
Bethe Ansatz equations.
\end{abstract}

\maketitle

\tableofcontents

\section{Introduction}\label{sec:intro}
The ODE/IM correspondence, developed since the seminal papers \cite{doreytateo98,bazhanov01}, concerns the
relations between the generalized spectral problems of linear Ordinary Differential Equations (ODE) and two dimensional
Integrable Quantum Field Theories (QFT), or continuous limit of Quantum Integrable Systems. Accordingly,
the acronym IM usually refers to Integrable Models (or Integrals of Motion). More recently,
it was observed that the correspondence can also be interpreted as a  relation between Classical and Quantum
Integrable Systems \cite{lukyanov10}, or as an instance of the Langlands duality \cite{FF11}.

The original ODE/IM correspondence states that the spectral determinant of certain Schr\"odinger operators coincides with 
the vacuum eigenvalue $Q(E)$ of the Q-operators of  the quantum KdV equation \cite{bazhanov97}, when $E$ is interpreted
as a parameter of the theory. This correspondence has been proved \cite{doreytateo98,bazhanov01} by showing that the
spectral determinant and the eigenvalues of the $Q$ operator solve the same functional equation, namely the Bethe Ansatz,
and they have the same analytic properties.

The correspondence has soon been generalized to higher eigenvalues of the Quantum KdV equation -- obtained modifying
the potential of the Schr\"odinger operator \cite{BLZ04} -- as well as  to Conformal Field Theory with extended ${\mc{W}}_n$-symmetries,
by considering ODE naturally associated with the simple Lie algebra $A_n$ (scalar linear ODEs of order $n+1$) \cite{dorey00,junji00}.
Within this approach, the original construction corresponds to the Lie algebra $A_1$, and the associated ODE is the Schr\"odinger equation.

After these results, it became clear that a more general picture including all simple Lie algebras --  as well as massive
(i.e. non conformal) Integrable QFT -- was missing. While the latter issue was addressed in \cite{gaiotto09,lukyanov10}
(and more recently in \cite{dorey13, dunning14, negro14}), the ODE/IM correspondence for simple Lie algebras different from $A_n$
has not yet been fully understood.

The first work proposing an extension of the ODE/IM correspondence to classical Lie algebras other than $A_n$ is \cite{dorey07},
where the authors introduce a $\mf{g}$-related scalar ODE 
as well as the important concept of $\Psi$-system. The latter is a set of quadratic relations that solutions of the
$\mf g$-related ODE are expected to satisfy in different representations.
It was then recognized in \cite{FF11} that the $\mf g$-related scalar differential
operators studied in \cite{BLZ04,dorey07} can be regarded as \textit{affine opers} associated to the Langlands dual
of the affine Lie algebra $\widehat{\mf g}$. This discovery  gave the theory a solid algebraic footing --
based on Kac-Moody algebras and Drinfeld-Sokolov Lax operators -- which was further investigated in \cite{Sun12} and that we hereby follow.

\medskip

In the present paper we establish the ODE/IM correspondence for a simply laced Lie algebra $\mf g$, in which case the affine Lie algebra
$\widehat{\mf g}$ coincides with its Langlands dual \cite{FF11}.
More concretely, we  prove that for any simple Lie algebra $\mf g$ of $ADE$ type there exists a family of spectral problems,
which are encoded by entire functions $Q^{(i)}(E)$, $i=1,\dots,n=\rank\mf g$, satisfying the $ADE$-Bethe Ansatz equations
\cite{reshetikhin87,zamo91,bazhanov02integrable}: 
\begin{equation}\label{eq:TBAintro}
\prod_{j = 1}^{n} \Omega^{\beta_jC_{ij}} \frac{Q^{(j)}\Big(\Omega^{\frac{C_{ij}}{2}}E^*\Big)}{Q^{(j)}
\Big(\Omega^{-\frac{C_{ij}}{2}}E^*\Big)}=-1
\,,
\end{equation}
for every $E^*\in\mb C$ such that $Q^{(i)}(E^*)=0$. In equation \eqref{eq:TBAintro}, $C=(C_{ij})_{i,j=1}^n$ is the Cartan matrix of the
algebra $\mf g$, while the phase $\Omega$ and the numbers $\beta_j$ are the free-parameters of the equation.
More precisely, following \cite{FF11} we introduce the $\widehat{\mf g}$-valued connection 
\begin{equation}\label{eq:Lintro}
\mc L(x,E)=\partial_x+\frac{\ell}{x}+e+p(x,E) e_0
\,,
\end{equation}
where $\ell\in\mf h$ is a generic element of the Cartan subalgebra $\mf h$ of $\mf g$, $e_0,e_1,\dots,e_{n}$ are
the positive Chevalley generators of the affine Kac-Moody algebra $\widehat{\mf g}$ and $e=\sum_{i=1}^{n}e_i$.
In addition, the potential\footnote{We stick to a potential of this form for simplicity. However all proofs work
with minor modifications for a more general potential discussed in \cite{BLZ04,FF11}.} is $p(x,E)=x^{Mh^\vee}-E$,
where $M>0$ and $h^\vee$ is the dual Coxeter number of $\mf g$.

After \cite{Sun12}, for every fundamental representation \footnote{Actually $V^{(i)}$ is an evaluation representation of the $i$-th
fundamental representation of $\mf g$.} $V^{(i)}$ of $\widehat{\mf{g}}$ we consider the differential equation
\begin{equation}\label{eq:ODEintro}
\mc L(x,E)\Psi(x,E)=0
\,.
\end{equation}
The above equation has two singular points, namely a regular singularity in  $x=0$ and an irregular singularity in
$x=\infty$,  and a natural connection problem arises when one tries to understand the behavior of the solutions globally.
We show that for every representation $V^{(i)}$ there exists a unique solution  $\Psi^{(i)}(x,E)$ to equation
\eqref{eq:ODEintro} which is subdominant for $x \to + \infty$ (by subdominant we mean the solution that goes to zero
the most rapidly). Then, we define the generalized spectral determinant $Q^{(i)}(E;\ell)$ as the coefficient of the
most singular (yet algebraic) term of the expansion of $\Psi^{(i)}(x,E)$ around $x=0$. Finally, we prove that for
generic $\ell \in \mf h$ the generalized spectral determinants $Q^{(i)}(E;\ell)$ are entire functions and satisfy
the Bethe Ansatz equation \eqref{eq:TBAintro}, also known as $Q$-system.

\medskip

The paper is organized as follows. In Section 1 we review some basic facts about the theory of finite dimensional Lie
algebras, affine Kac-Moody algebras and their finite dimensional representations. Furthermore, we introduce the
relevant representations that we will consider throughout the paper.

\medskip

Section 2 is devoted to the asymptotic analysis of equation \eqref{eq:ODEintro}. The main result is provided by
Theorem \ref{thm:asymptotic}, from which the existence of the fundamental (subdominant) solution $\Psi^{(i)}(x,E)$
follows. The asymptotic properties of the solution turn out to depend on the spectrum of
$\Lambda=e_0+e\in\widehat{\mf g}$, as observed in \cite{Sun12}.

\medskip
The main result of Section 3 is Theorem \ref{thm:psi-sistem}, which establishes the following quadratic
relation among the functions $\Psi^{(i)}(x,E)$, known as $\Psi$-system, that was conjectured in \cite{dorey07}:
\begin{equation}\label{eq:PsiIntro}
m_i\big(  \Psi_{-\frac{1}{2}}^{(i)}(x,E) \wedge \Psi_{\frac{1}{2}}^{(i)}(x,E) \big) =\otimes_{j\in I} \Psi^{(j)}(x,E)^{\otimes B_{ij}}
\,.
\end{equation}
Here, $\Psi_{\pm\frac{1}{2}}^{(i)}(x,E)$ is the twisting of the solution $\Psi^{(i)}(x,E)$ defined in
\eqref{eq:Psik}, while $m_i$ is the morphism of $\widehat{\mf{g}}$-modules defined by \eqref{m_i-affine},
and $B=(B_{ij})_{i,j=1}^n$ denotes the incidence matrix of $\mf g$. In order to prove Theorem \ref{thm:psi-sistem},
we first establish in Proposition \ref{prop:pisa2} some important properties of the eigenvalues of $\Lambda$ on
the relevant representations previously introduced. More precisely, we show that in each representation $V^{(i)}$
there exists a maximal positive eigenvalue $\lambda^{(i)}$ (see Definition \ref{def:maximal}), and that the following
remarkable identity holds:
\begin{equation}\label{eq:lambdaintro}
\left(e^{-\frac{\pi i}{h^\vee}}+e^{\frac{\pi i}{h^\vee}}\right)\lambda^{(i)} =\sum_{j=1}^nB_{ij}\lambda^{(j)}\,,
\qquad i=1,\dots,n=\rank\mf g
\,.
\end{equation}
Equation \eqref{eq:lambdaintro} shows that the vector $(\lambda^{(1)},\dots,\lambda^{(n)})$ is the Perron-Frobenius
eigenvector of the incidence matrix $B$, and this implies that the $\lambda^{(i)}\mbox{'s}$ are (proportional to) the
masses of the Affine Toda Field Theory with the same underlying Lie algebra $\mf g$ \cite{olive92}.

\medskip
In Section \ref{sec:Q} we derive the Bethe Ansatz equations \eqref{eq:TBAintro}.
To this aim, we study the local behavior of equations \eqref{eq:ODEintro} close to the Fuchsian
singularity $x=0$, and we define the generalized spectral determinants $Q^{(i)}(E;\ell)$ and $\widetilde{Q}^{(i)}(E;\ell)$
using some properties of the Weyl group of $\mf g$.
In addition, using the $\Psi$-system \eqref{eq:PsiIntro} we prove Theorem \ref{thm:QQtilde}, which gives a set of
quadratic relations among the generalized spectral determinants (equation \eqref{eq:QQtilde}).
Evaluating these relations at the zeros of the functions $Q^{(i)}(E;\ell)$, we obtain
the Bethe Ansatz equations \eqref{eq:TBAintro}.

\medskip
Finally, in Section \ref{app:airy}, we briefly study an integral representation of the subdominant
solution of equation \eqref{eq:ODEintro} with a linear potential $p(x,E)=x$, and we compute
the spectral determinants $Q^{(i)}(E;\ell)$.
Some explicit computations of the eigenvectors of $\Lambda$ are provided in Appendix \ref{app:PF}.
\medskip

In this work we do not study the asymptotic behavior of the $Q^{(i)}(E)$'s, for $E \ll 0$,
that was conjectured in \cite{dorey07}, neither we address the problem of solving the Bethe Ansatz equations
\eqref{eq:TBAintro} -- for a given asymptotic behavior of the functions $Q^{(i)}(E)$'s -- via the Destri-de
Vega integral equation \cite{destri92}. These problems are of a rather different mathematical nature with
respect to the ones treated in this work, and they will be considered in a separate paper.

\subsection*{Note added in press} We proved the ODE/IM correspondence for non simply--laced Lie algebras in \cite{marava15-2}.

\subsection*{Acknowledgments}
The authors were partially supported by the INdAM-GNFM ``Progetto Giovani 2014''. D. M. is supported by the
FCT scholarship, number SFRH/BPD/75908/2011.
D.V. is supported by an NSFC ``Research Fund for International Young Scientists'' grant.
Part of the work was completed during the visit of A.R. and
D.V. to the Grupo de F\'isica Matem\'atica of the Universidade de Lisboa, and during the participation of the
authors to the intensive research program ``Perspectives in Lie Theory'' at the Centro di Ricerca
Matematica E. De Giorgi in Pisa. We wish to thank these institutions for their kind hospitality.
D.M thanks the Dipartimento di Fisica of the Universit\`a degli Studi di Torino for the kind hospitality.
A. R. thanks the Dipartimento di Matematica of the Universit\`a degli Studi di Genova for the kind hospitality.
We also wish to thank Vladimir Bazhanov, Alberto De Sole, Edward Frenkel, Victor Kac and Roberto Tateo for useful discussions.

\medskip

While we were writing the present paper, we became aware of the work \cite{dorey15} (see \cite{negro14}
for a more detailed exposition) that was being written by Patrick Dorey, Stefano Negro and Roberto Tateo.
Their paper focuses on ODE /IM correspondence for massive models related to affine Lie algebras and
deals with similar problems in representation theory. We thank Patrick, Stefano and Roberto for
sharing a preliminary version of their paper.

\section{Affine Kac-Moody algebras and their finite dimensional representations}\label{sec:1}
In this section we review some basic facts about simple Lie algebras, affine Kac-Moody algebras and their
finite dimensional representations which will be used throughout the paper.
The discussion is restricted to simple Lie algebras of $ADE$ type, but most of the results hold also
in the non-simply laced case; we refer to \cite{FH91,Kac90} for further details. At the end of
the section we introduce a class of $\widehat{\mf g}$-valued connections and the associated differential equations.

\subsection{Simple lie algebras and fundamental representations}\label{sec:lie_algebras}

Let $\mf g$ be a simple finite dimensional Lie algebra of $ADE$ type, and let $n$ be its rank.
The Dynkin diagrams associated to these algebras are given in Table \ref{fig:dynkin}.
Let $C=(C_{ij})_{i,j\in I}$ be the Cartan matrix of $\mf g$, and let us denote by
$B=2\mbbm1_n-C$ the
corresponding incidence matrix.
Since $\mf g$ is simply-laced, it follows that  $B_{ij}=1$ if the nodes $i,j$
of the Dynkin diagram of $\mf g$ are linked by an edge, and $B_{ij}=0$ otherwise.
%

\begin{table}[H]
\caption{Dynkin diagrams of simple Lie algebras of $ADE$ type.}
\label{fig:dynkin}
\begin{align*}
&
A_n
\quad
\begin{tikzpicture}[start chain]
\dnode{1}
\dnode{2}
\dydots
\dnode{n-1}
\dnode{n}
\end{tikzpicture}
&\quad\quad&
E_6
\quad
\begin{tikzpicture}
\begin{scope}[start chain]
\foreach \dyni in {1,2,3,5,6} {
\dnode{\dyni}
}
\end{scope}
\begin{scope}[start chain=br going above]
\chainin (chain-3);
\dnodebr{4}
\end{scope}
\end{tikzpicture}
\\
&
D_n
\quad
\begin{tikzpicture}
\begin{scope}[start chain]
\dnode{1}
\dnode{2}
\node[chj,draw=none] {\dots};
\dnode{n-2}
\dnode{n}
\end{scope}
\begin{scope}[start chain=br going above]
\chainin(chain-4);
\dnodebr{n-1}
\end{scope}
\end{tikzpicture}
&\quad\quad&
E_7
\quad
\begin{tikzpicture}
\begin{scope}[start chain]
\foreach \dyni in {1,2,3,4,6,7} {
\dnode{\dyni}
}
\end{scope}
\begin{scope}[start chain=br going above]
\chainin (chain-4);
\dnodebr{5}
\end{scope}
\end{tikzpicture}
\\
&
&\quad\quad&
\!\!\!\!\!\!\!\!\!\!\!\!\!\!\!\!\!\!\!\!\!\!\!\!\!\!\!\!\!\!\!\!
\!\!\!\!\!\!\!\!\!\!\!\!\!\!\!\!\!\!\!\!\!\!\!\!\!\!\!\!\!\!\!\!
E_8
\quad
\begin{tikzpicture}
\begin{scope}[start chain]
\foreach \dyni in {1,2,3,4,5,7,8} {
\dnode{\dyni}
}
\end{scope}
\begin{scope}[start chain=br going above]
\chainin (chain-5);
\dnodebr{6}
\end{scope}
\end{tikzpicture}
&
\end{align*}
\end{table}

\noindent
Let $\{f_i,h_i,e_i\mid  i\in I=\{1,\dots,n\}\}\subset \mf g$
be the set of Chevalley generators of $\mf g$.
They satisfy the following relations ($i,j\in I$):
\begin{equation}\label{eq:chevalley}
[h_i,h_j]=0\,,
\qquad
[h_i,e_j]=C_{ij}e_j\,,
\qquad
[h_i,f_j]=-C_{ij}f_j\,,
\qquad
[e_i,f_j]=\delta_{ij} h_i\,,
\end{equation}
together with the Serre's identities.
Recall that
$$
\mf h=\bigoplus_{i\in I}\mb C h_i\subset\mf g
$$
is a Cartan subalgebra of $\mf g$ with corresponding Cartan decomposition
\begin{equation}\label{eq:cartan_dec}
\mf g=\mf h\oplus\left(\bigoplus_{\alpha\in R}\mf g_\alpha\right).
\end{equation}
As usual, the $\mf g_\alpha $ are the eigenspaces of the adjoint representation
and $R\subset \mf{h}^\ast$ is the set of roots.
For every $i\in I$, let $\alpha_i\in\mf h^*$ be defined by
$\alpha_i(h_j)=C_{ij}$, for every $j\in I$. Then $\mf g_{\alpha_i}=\mb Ce_{\alpha_i}$.
Hence, $\Delta=\{\alpha_i\mid i\in I\}\subset R$.
The elements $\alpha_1,\dots,\alpha_n$ are called the simple roots of $\mf g$.
They can be associated to each vertex of the Dynkin diagram as in Table \ref{fig:dynkin}. We denote by
$$
P=\{\lambda\in\mf h^*\mid \lambda(h_i)\in\mb Z,\,\forall i\in I\}\subset\mf h^*
$$
the set of weights of $\mf g$ and by
$$
P^+=\{\lambda\in P\mid \lambda(h_i)\geq0,\,\forall i\in I\}\subset P
\,
$$
the subset of dominant weights of $\mf g$.
We recall that we can define a partial ordering on $P$ as follows: for $\lambda,\mu\in P$, we say that
$\lambda<\mu$ if $\lambda-\mu$ is a sum of positive roots.  For every $\lambda\in\mf h^*$ there exists a
unique irreducible representation $L(\lambda)$ of
$\mf g$ such that there is $v\in L(\lambda)\setminus\{0\}$ satisfying
$$
h_iv=\lambda(h_i)v\,,
\qquad
e_iv=0\,,
\qquad
\text{for every }i\in I
\,.
$$
$L(\lambda)$ is called the highest weight representation of weight $\lambda$, and $v$ is the highest weight
vector of the representation. We have that $L(\lambda)$ is finite dimensional if and only if
$\lambda\in P^+$, and conversely, any irreducible finite dimensional representation of $\mf g$ is of the form
$L(\lambda)$ for some $\lambda\in P^+$. For $\lambda\in P^+$, the representation $L(\lambda)$ can be decomposed in the direct sum of its
(finite dimensional) weight spaces $L(\lambda)_\mu$, with $\mu\in P$,
and we have that $L(\lambda)_\mu\neq0$ if and only if $\mu<\lambda$. We denote by $P_\lambda$ the set of weights
appearing in the weight space decomposition of $L(\lambda)$; the multiplicity of $\mu$ in the representation
$L(\lambda)$ is defined as the dimension of the weight space $L(\lambda)_\mu$.
\\

\noindent
Recall that the fundamental weights of $\mf g$ are those elements $\omega_i\in P^+,\, i\in I$,  satisfying
\begin{equation}\label{20150107:eq1}
\omega_i(h_j)=\delta_{ij}\,,
\qquad
\text{for every }j\in I
\,.
\end{equation}
The corresponding highest weight representations  $L(\omega_i)$, $i\in I$,  are known as fundamental
representations of $\mf g$, and for every $i\in I$  we denote by $v_i\in L(\omega_i)$ the highest weight
vector of the representation $L(\omega_i)$. 
 Since the simple roots and the
fundamental weights of $\mf g$ are related via the Cartan matrix (which is non-degenerate) by the relation
$$
\omega_i=\sum_{j\in I}(C^{-1})_{ji}\alpha_j, \qquad i\in I
\,,
$$
then we can associate to the $i-$th vertex of the Dynkin diagram of $\mf g$
the corresponding fundamental representation $L(\omega_i)$ of $\mf g$.

\subsection{The representations
\texorpdfstring{$L(\eta_i)$}{Leta_i}}\label{sec:Wtilde}
Let us consider the dominant weights 
$$
\eta_i=\sum_{j\in I}B_{ij}\omega_j, \qquad i\in I
\,,
$$
where $B=(B_{ij})_{i,j\in I}$ is the incidence matrix defined in Section \ref{sec:lie_algebras}, and let
$L(\eta_i)$ be the corresponding irreducible finite dimensional representations.
In addition, we consider the tensor product representations
$$
\bigotimes_{j\in I}L(\omega_j)^{\otimes B_{ij}}\,,
\qquad i\in I
\,.
$$
We now show that we can find a copy of the irreducible representation
$L(\eta_i)$ inside the representations $\bigwedge^2L(\omega_i)$ and
$\bigotimes_{j\in I}L(\omega_j)^{\otimes B_{ij}}$.
This will be used in Section \ref{sec:psi-sistem} to construct the so-called $\psi$-system. %
\begin{lemma}\label{lem:021202}
The following facts hold in the representation $\bigwedge^2 L(\omega_i)$.
\begin{enumerate}[(a)]
\item For every $j\in I$, we have that
$$
e_j(f_iv_i \wedge v_i)=0
\,.
$$
\item For every $h \in \mf h$, we have that
$$
h(f_iv_i \wedge v_i)=\eta_i(h)(f_iv_i\wedge v_i)
\,.
$$
\item Any other weight of the representation $\bigwedge^2
L(\omega_i)$ is smaller than $\eta_i$.
\end{enumerate}
\end{lemma}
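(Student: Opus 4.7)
The plan is to verify each of the three assertions by direct computation from the Chevalley relations \eqref{eq:chevalley} and the Leibniz rule for the $\mf g$-action on $\bigwedge^2 L(\omega_i)$, together with the standard fact that the weights of $L(\omega_i)$ lie in $\omega_i - Q^+$, where $Q^+=\sum_{j\in I}\mb Z_{\geq 0}\alpha_j$. For (b), I would note that $v_i$ has weight $\omega_i$ and $f_iv_i$ has weight $\omega_i-\alpha_i$, so $f_iv_i\wedge v_i$ has weight $2\omega_i-\alpha_i$; since $\omega_k(h_j)=\delta_{kj}$ and $\alpha_i(h_j)=C_{ij}$, one has $\alpha_i=\sum_{j\in I}C_{ij}\omega_j$ and therefore $2\omega_i-\alpha_i=\sum_{j\in I}(2\delta_{ij}-C_{ij})\omega_j=\sum_{j\in I}B_{ij}\omega_j=\eta_i$. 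For (a), the Leibniz rule gives $e_j(f_iv_i\wedge v_i)=(e_jf_iv_i)\wedge v_i+f_iv_i\wedge(e_jv_i)$; the second summand vanishes because $v_i$ is a highest weight vector, while $e_jf_iv_i=[e_j,f_i]v_i+f_ie_jv_i=\delta_{ij}h_iv_i=\delta_{ij}v_i$, which is zero for $j\neq i$ and yields $v_i\wedge v_i=0$ for $j=i$.

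For (c), I would first observe that any weight $\mu$ of $L(\omega_i)$ with $\mu\neq\omega_i$ satisfies $\mu\leq\omega_i-\alpha_i$: the only simple root $\alpha_j$ for which $f_jv_i\neq 0$ is $\alpha_i$ (because $\omega_i(h_j)=\delta_{ij}$), so any descent from $v_i$ first passes through $\omega_i-\alpha_i$, and every other weight of $L(\omega_i)$ lies below this in the partial order. A weight of $\bigwedge^2L(\omega_i)$ is of the form $\mu+\mu'$ with $\mu,\mu'$ weights of $L(\omega_i)$ and, because $L(\omega_i)_{\omega_i}=\mb C v_i$ forces $v_i\wedge v_i=0$, we cannot have $\mu=\mu'=\omega_i$; hence at least one of them is $\leq\omega_i-\alpha_i$, giving $\mu+\mu'\leq 2\omega_i-\alpha_i=\eta_i$. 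The equality case requires one weight to be $\omega_i$ and the other $\omega_i-\alpha_i$, so the $\eta_i$-weight space of $\bigwedge^2 L(\omega_i)$ is one-dimensional and spanned by $f_iv_i\wedge v_i$, using that $L(\omega_i)_{\omega_i-\alpha_i}=\mb C f_iv_i$ by $\mf{sl}_2$-theory for $(e_i,h_i,f_i)$ on the $\alpha_i$-string through $v_i$; every other weight is then strictly smaller than $\eta_i$.

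The main (though mild) point of the argument is the weight-theoretic observation that any descent from $v_i$ in $L(\omega_i)$ necessarily goes through $\omega_i-\alpha_i$, which is itself an immediate consequence of $\omega_i(h_j)=\delta_{ij}$; everything else is a routine verification in the Chevalley basis.
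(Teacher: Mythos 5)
Your proof is correct and follows essentially the same route as the paper's: parts (a) and (b) by direct computation with the Chevalley relations and $\omega_i(h_j)=\delta_{ij}$, and part (c) by the weight-theoretic observation that every weight of $L(\omega_i)$ other than $\omega_i$ lies below $\omega_i-\alpha_i$, so that weights of $\bigwedge^2L(\omega_i)$ are bounded above by $\omega_i+(\omega_i-\alpha_i)=\eta_i$. If anything, your treatment of (c) is slightly more explicit than the paper's (which simply cites ``representation theory of simple Lie algebras''), since you justify why the first descent from $v_i$ must be along $\alpha_i$ and why the $\eta_i$-weight space is one-dimensional.
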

\begin{proof}
Since $v_i$ is a highest weight vector for $L(\omega_i)$ we have that
$e_jv_i=0$ for every $j\in I$.
Moreover, using the relations $[e_j,f_i]=\delta_{ij}h_i$ together with equation \eqref{20150107:eq1},
we have
$$
e_j f_i v_i= f_i e_jv_i+\delta_{ij}h_i v_i=\delta_{ij}v_i
\,,
$$
and therefore $e_j(f_iv_i\wedge v_i)=\delta_{ij}v_i\wedge v_i=0$, proving part (a). By linearity, it
suffices to show part (b) for $h=h_i$, $i\in I$.
Recall that, by equation \eqref{eq:chevalley}, we have $[h_i,f_j]=-C_{ij}f_j$, for every $i,j\in I$. Then,
$$
h_j f_i v_i= \left(f_i h_j +[h_j,f_i]\right)v_i
=(\delta_{ij}- C_{ji}) f_i v_i
\,.
$$
Hence, $h_j (f_i v_i \wedge v_i) = (2 \delta_{ij} -C_{ji}) (f_i v_i \wedge v_i)
=\eta_i(h_j)(f_iv_i\wedge v_i)$. Finally, let $w\neq \omega_i,\eta_i-\omega_i$ be a weight appearing in $L(\omega_i)$.
It follows from representation theory of simple Lie algebras that
$$
\omega<\eta_i-\omega_i=\omega_i-\sum_{j\in I}C_{ji}\omega_j<\omega_i
\,.
$$
Therefore, by part (b), the maximal weight of $\bigwedge^2 L(\omega_i)$ is
$w_i+\eta_i-\omega_i=\eta_i$, thus proving part (c).
\end{proof}
\noindent
As a consequence of the above Lemma it follows that $f_iv_i\wedge v_i\in\bigwedge^2L(\omega_i)$
is a highest weight vector of weight $\eta_i$.
Since $\mf g$ is simple, the subrepresentation of $\bigwedge^2L(\omega_i)$
generated by the highest weight vector $f_iv_i\wedge v_i$ is irreducible.
Therefore, it is isomorphic to $L(\eta_i)$.
By the complete reducibility of $\bigwedge^2L(\omega_i)$ we can decompose it as follows:
$$
\bigwedge^2L(\omega_i)=L(\eta_i)\oplus U\,,
$$
where $U$ is the direct sum of all the irreducible representations different from $L(\eta_i)$. By an
abuse of notation we denote with the same symbol the representation $L(\eta_i)$ and its copy in
$\bigwedge^2L(\omega_i)$. For every $i\in I$, let us denote by $w_i=\otimes_{j\in I} v_j^{\otimes B_{ij}}$.
The following analogue of Lemma \ref{lem:021202} in the case of  the representation
$\bigotimes_{j\in I}L(\omega_j)^{\otimes B_{ij}}$ holds true.
\begin{lemma}\label{lem:231202}
The following facts hold in the representation $\bigotimes_{j\in I}L(\omega_j)^{\otimes B_{ij}}$.
\begin{enumerate}[(a)]
\item For every $j\in I$, we have that
$$
e_jw_i=0
\,.
$$
\item For every $h \in \mf h$, we have that
$$
h w_i=\eta_i(h)w_i
\,.
$$
\item Any other weight of the representation $\bigotimes_{j\in I}L(\omega_j)^{\otimes B_{ij}}$
is smaller than $\eta_i$.
\end{enumerate}
\end{lemma}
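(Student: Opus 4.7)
The proof plan mirrors that of Lemma \ref{lem:021202}, but is simpler because we work in a tensor product rather than an exterior square, and all three claims reduce immediately to the fact that each $v_j$ is a highest weight vector of $L(\omega_j)$.

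First, for part (a), I would use that the Chevalley generator $e_j$ acts on a tensor product as a derivation: $e_j$ acts on each tensor factor while the other factors act by the identity. Since $v_k$ is the highest weight vector of $L(\omega_k)$, we have $e_j v_k = 0$ for all $j,k \in I$. Hence each term in the Leibniz expansion of $e_j w_i$ vanishes.

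Next, for part (b), it suffices by linearity to verify the claim for $h = h_k$, $k \in I$. The element $h$ also acts on the tensor product as a derivation, so
\[
h_k\, w_i = \Big(\sum_{j\in I} B_{ij}\,h_k v_j \otimes\cdots\Big)\text{-type sum}= \sum_{j\in I}B_{ij}\,\omega_j(h_k)\,w_i = \eta_i(h_k)\,w_i,
\]
where I have used $h_k v_j = \omega_j(h_k) v_j = \delta_{kj}v_j$ on each factor and the definition $\eta_i = \sum_{j}B_{ij}\omega_j$.

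Finally, for part (c), the weights of the tensor product $\bigotimes_{j\in I} L(\omega_j)^{\otimes B_{ij}}$ are precisely the sums of weights taken from each factor with multiplicity. Since every weight of $L(\omega_j)$ is $\le \omega_j$ in the partial order on $P$, any weight $\mu$ of the tensor product satisfies $\mu \le \sum_{j\in I} B_{ij}\omega_j = \eta_i$, and combined with part (b) this yields the claim; no serious obstacle arises since each fundamental weight has multiplicity one in its own fundamental representation, so equality is attained only on the line spanned by $w_i$.
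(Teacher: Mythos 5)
Your proof is correct and follows essentially the same route as the paper, which simply notes that the argument is the same as for Lemma \ref{lem:021202}: all three parts reduce to the highest-weight property of each $v_j$, the derivation action of $\mf g$ on tensor products, and the fact that a sum $\sum_j(\omega_j-\mu_j)$ of non-negative integer combinations of simple roots vanishes only if each summand does. (The tensor-product case is indeed simpler than the wedge case, since no factor of the form $f_iv_i$ appears.)
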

\begin{proof}
Same as the proof of Lemma \ref{lem:021202}.
\end{proof}
As for the previous discussion, by Lemma \ref{lem:231202} it follows
that $w_i\in \bigotimes_{j\in I}L(\omega_j)^{\otimes B_{ij}}$
is a highest weight vector which generates an irreducible subrepresentation isomorphic to $L(\eta_i)$.
Lemmas \ref{lem:021202} and \ref{lem:231202}, together with the Schur Lemma, imply that there exists a
unique morphism of representations
\begin{equation}\label{m_i}
m_i: \bigwedge^2 L(\omega_i) \to
\bigotimes_{j\in I}L(\omega_j)^{\otimes B_{ij}}
\,,
\end{equation}
such that $\ker m_i= U$ and $m_i( f_iv_i \wedge v_i )=w_i$.
%

\subsection{Affine Kac-Moody algebras and finite dimensional representations}\label{sec:kac_moody}

Let $h^\vee$ be the dual Coxeter number of $\mf g$.
Let us denote by $\kappa$ the Killing form of $\mf g$ and let us fix the following
non-degenerate symmetric invariant bilinear form on $\mf g$ ($a,b\in\mf g$):
$$
(a| b)=\frac{1}{h^\vee}\kappa(a|b)
\,.
$$
Let $\mf g\otimes\mb C[t,t^{-1}]$ be the space of Laurent polynomials with coefficients in $\mf g$.
For $a(t)=\sum_{i=-M}^Na_it^i\in\mb C[t,t^{-1}]$, we let
$$
\res_{t=0}a(t)dt=a_{-1}
\,.
$$
The affine Kac-Moody algebra $\widehat{\mf g}$ is the vector space
$\widehat{\mf g}=\mf g\otimes\mb C[t,t^{-1}]\oplus\mb Cc$
endowed with the following Lie algebra structure ($a,b\in\mf g,f(t),g(t)\in\mb C[t,t^{-1}]$):
\begin{align}
\begin{split}
\label{20141020:eq3}
&[a\otimes f(t),b\otimes g(t)]=[a,b]\otimes f(t)g(t)+(a| b)\res_{t=0}\left(f'(t)g(t)dt\right)c\,,
\\
&[c,\widehat{\mf g}]=0
\,.
\end{split}
\end{align}
The set of Chevalley generators of $\widehat{\mf g}$ is obtained by adding to the Chevalley
generators of $\mf g$ some new generators $f_0,h_0,e_0$ (for the construction, see
for example \cite{Kac90}). The generator $e_0$, which plays an important role in the paper,
can be constructed as follows. There exists a root $-\theta\in R$, known as
the lowest root of $\mf g$, such that $-\theta-\alpha_i\not\in R$, for every
$i\in I$.  Then
\begin{equation}\label{20151129:eq1}
e_0=a\otimes t
\,,
\qquad\text{for some }a\in\mf g_{-\theta}
\,.
\end{equation}
\noindent
We now consider an important class of representations of $\widehat{\mf g}$. Let $V$ be a finite dimensional representation of $\mf g$.
For $\zeta\in\mb C^*$ we define a representation of $\widehat{\mf g}$, which we denote
by $V(\zeta)$, as follows:
as a vector space we take $V(\zeta)= V$, and the action of $\widehat{\mf g}$ is defined 
by
$$
(a \otimes f(t))v=f(\zeta)(av)\,,
\qquad
c\,v=0\,,
\qquad
\text{for }a\in\mf g\,,f(t)\in\mb C[t,t^{-1}]\,, v\in V
\,.
$$
The representation $V(\zeta)$ is called an evaluation representation of $\widehat{\mf g}$.
If $V$ and $W$ are representations of $\mf g$ then any morphism of representations
$f:V\to W$ can obviously be extended to a morphism of representations $f:V(\zeta)\to W(\zeta)$,
which we denote by the same letter by an abuse of notation. 
Similarly, when referring to weights or weight vectors of the evaluation representation
$V(\zeta)$, we mean the weights and weight vectors of the representation $V$ with respect to the action 
of $\mf g$.
\\

For a representation $V$ of $\mf g$ and  $k\in\mb C$, we denote $V_k=V(e^{2\pi i k})$ 
the evaluation representation of $\widehat{\mf g}$ corresponding to $\zeta=e^{2\pi i k}$.
Clearly, if $k\in\mb Z$ then $V_k=V_0=V(1)$, and if $k\in\frac12+\mb Z$ then
$V_k=V_{\frac12}=V(-1)$.
Then, for each vertex $i\in I$ of the Dynkin diagram of $\mf g$, we associate the following
finite dimensional representation of $\widehat{\mf g}$:
\begin{equation}\label{fund_rep}
V^{(i)}=L(\omega_i)_{\frac{p(i)}2}
\,,
\end{equation}
where $L(\omega_i)$ is the fundamental representation of $\mf g$ with highest weight $\omega_i$, and the
function $p:I\to\mb Z/2\mb Z$ is defined inductively as follows:
\begin{equation}\label{eq:p}
p(1)=0\,,
\qquad
p(i)=p(j)+1\,,
\text{ for }j<i\text{ such that } B_{ij}>0.
\end{equation}
 By an abuse of language we call the representations \eqref{fund_rep}, for $i\in I$, the finite dimensional
 fundamental representations of $\widehat{\mf g}$. These representations will play a fundamental role in the present paper.

\begin{example}\label{exa:An}
In type $A_n$,  $V^{(1)}=L(\omega_1)_0$ is the evaluation representation at $\zeta=1$
of the standard representation $L(\omega_1)=\bb{C}^{n+1}$.
Moreover,
$$
V^{(i)}=\bigwedge^iV_{\frac{i-1}{2}}^{(1)}
\,,
\qquad
\text{for }k=1,\dots,n
\,.
$$
\end{example}
\begin{example}\label{exa:Dn}
In type $D_n$,  $V^{(1)}=L(\omega_1)_0$ is the evaluation representation at $\zeta=1$
of the standard representation $L(\omega_1)=\bb{C}^{2n}$.
Moreover,
$$
V^{(i)}=\bigwedge^iV_{\frac{i-1}{2}}^{(1)}
\,,
\qquad
\text{for }k=1,\dots,n-2
\,.
$$
Finally, $V^{(n-1)}=L(\omega_{n-1})_{\frac n2}$ and
$V^{(n)}=L(\omega_{n})_{\frac n2}$ are the evaluation representations at $\zeta=(-1)^n$
of the so-called half-spin representations.
\end{example}
\begin{example}\label{exa:E6}
In type $E_6$, $V^{(1)}=L(\omega_{1})_{0}$ and $V^{(5)}=L(\omega_{5})_{0}$ are the evaluation
at $\zeta=1$ of the two $27$-dimensional representations (they are dual to each other).
Moreover,
$$V^{(2)}=\bigwedge^2V_{\frac{1}{2}}^{(1)}\,,
\qquad
V^{(3)}=\bigwedge^3V_{0}^{(1)}\cong\bigwedge^3V_{0}^{(5)}\,,
\qquad
V^{(4)}=\bigwedge^2V_{\frac{1}{2}}^{(5)}
\,.
$$
Finally, $V^{(6)}=L(\omega_{6})_{\frac 12}$ is the evaluation representation at $\zeta=-1$
of the adjoint representation.
\end{example}

\noindent
As a final remark, note that by equation \eqref{fund_rep} we have
\begin{equation}\label{20150108:eq2}
\left(\bigwedge^2L(\omega_i)\right)_{\frac{p(i)+1}{2}}
=\bigwedge^2V^{(i)}_{\frac{1}{2}}
\,,
\end{equation}
and if we introduce the evaluation representation
\begin{equation}\label{20150108:eq1}
M^{(i)}=\bigotimes_{j\in I}\big(V^{(j)}\big)^{\otimes B_{ij}}, \qquad \forall i\in I
\,,
\end{equation}
then by equation \eqref{fund_rep}, and from the fact that  $B_{ij}\neq0$ implies $p(i)=p(j)+1$, it follows that
the morphism $m_i$ defined by equation \eqref{m_i} extends to a morphism of
evaluation representations
\begin{equation}\label{m_i-affine}
m_i:\bigwedge^2V^{(i)}_{\frac{1}{2}}\to M^{(i)}
\,.
\end{equation}
The construction of the $\psi-$system will be provided by means of the above extended morphism.
Due to Lemmas \ref{lem:021202} and \ref{lem:231202}, the evaluation representation
\begin{equation}\label{eq:evWi}
W^{(i)}=L(\eta_i)_{\frac{p(i)+1}{2}} \,, i\in I\,,
\end{equation}
is a subrepresentation of both $\bigwedge^2V^{(i)}_{\frac{1}{2}}$ and $M^{(i)}$, and the copies of $W^{(i)}$
in these representations are identified by means of the morphism $m_i$.

\subsection{\texorpdfstring{$\widehat{\mf g}$}{ghat}-valued connections
and differential equations}\label{sec:L}

Let $\{f_i, h_i,e_i\mid i=0,\dots, n\}\subset\widehat{\mf g}$
be the set of Chevalley generators of $\widehat{\mf g}$,
and let us denote $e=\sum_{i=1}^ne_i$.
Fix an element $\ell\in\mf h$.
Following \cite{FF11} (see also \cite{Sun12}),
we consider the $\widehat{\mf g}$-valued connection 
\begin{equation}\label{20141020:eq1}
\mc L(x,E)=\partial_x+\frac{\ell}{x}+e+p(x,E) e_0\,,
\end{equation}
where $p(x,E)=x^{Mh^\vee}-E$, with $M>0$ and $E\in\mb C$.
Since $e$ is a principal nilpotent element, it follows from the Jacobson-Morozov Theorem together with equation
\eqref{20151129:eq1} that there exists an element $h\in\mf h$ such that 
\begin{equation}\label{20141020:eq2}
[h,e]=e
\,,
\qquad
[h,e_0]=-(h^\vee-1)e_0
\,,
\end{equation}
see for instance \cite{CMG93}. Under the isomorphism between $\mf{h}$ and $\mf{h}^\ast$ provided by the Killing form, the element $h$ corresponds to the Weyl vector (half sum of positive roots).  Let $k\in\mb C$  and introduce the quantities
$$
\omega=e^{\frac{2\pi i}{h^\vee(M+1)}}
\,,
\qquad
\qquad
\Omega=e^{\frac{2\pi iM}{M+1}}=\omega^{h^\vee M}
\,.
$$
Moreover, let $\mc{M}_{k}$ be the automorphism of $\widehat{\mf g}$-valued connections which
fixes $\partial_x$, $\mf g$ and $c$ and sends $t \to e^{2\pi i k} t$.
Then from equations \eqref{20141020:eq3}, \eqref{20141020:eq1} and \eqref{20141020:eq2}
we get 
\begin{equation}\label{20141020:eq4}
\mc{M}_{k}\left(\omega^{k\ad h}\mc L(x,E)\right)
=\omega^k \mc L(\omega^k x,\Omega^kE)
\,.
\end{equation}
We denote $\mc L_k(x,E)=\mc M_k\left(\mc L(x,E)\right)$, for every $k\in\mb C$. Note in particular
that $\mc L_0(x,E)=\mc L(x,E)$. Equation \eqref{20141020:eq4}  implies that for a given  finite
dimensional representation $V$  of $\mf g$, the action of $\mc L_k(x,E)$ on the evaluation representation $V_0$
of $\widehat{\mf g}$ is the same as the action
of $\mc L(x,E)$ on $V_k$, for every $k\in\mb C$.
In addition, let $ \widehat{\mb{C}}$ be the universal cover of $\bb{C}^{*}$,  suppose $\phi(x,E):\widehat{\mb{C}} \to V_0$ 
is a family -- depending on the parameter  $E$ -- of solutions of the (system of) ODE
\begin{equation}\label{20141021:eq1}
\mc L(x,E)\phi(x,E)=0
\,,
\end{equation}
and for $k\in\mb C$ introduce the function
\begin{equation}\label{20150108:eq8}
\phi_k(x,E)=\omega^{-kh}\phi(\omega^kx,\Omega^kE)
\,.
\end{equation}
Since on any evaluation representation of $\widehat{\mf g}$ we have 
\begin{equation}\label{20141230:eq1}
e^{\ad a}b=e^a b e^{-a}\,,
\qquad a,b\in\widehat{\mf g}
\,,
\end{equation}
then by equations \eqref{20141020:eq4} and \eqref{20141230:eq1}, we have that
\beq\label{20141128:eq1}
\mc L_k(x,E)\phi_k(x,E)=0
\,,
\eeq
namely $\phi_k(x,E):\widehat{\mb{C}} \to V_k$ is a solution of \eqref{20141021:eq1} for the representation
$V_k$.

\section{Asymptotic Analysis}
Let $V$ be an evaluation representation
of $\widehat{\mf g}$.
Let us denote by
$\widetilde{\mb{C}}=\mb C\setminus\mb R_{\leq0}$, the complex plane minus the semi-axis of real
non-positive numbers,
and let us consider a solution $\Psi:\widetilde{\mb{C}}\rightarrow V$ of
\beq\label{20141125:eq1}
\mc{L}(x,E)\Psi(x)=\Psi'(x)+ \left(\frac{\ell}{x}+e+p(x,E) e_0\right) \Psi(x) =0
\,.
\eeq
Equation \eqref{20141125:eq1} is a (system of) linear ODEs, with a Fuchsian singularity at $x=0$
and an irregular singularity at $x=\infty$. Our goal is to prove the existence of solutions of equation \eqref{20141125:eq1}
with a prescribed asymptotic behavior at infinity in a Stokes sector of the complex plane.
In the present form, however, equation \eqref{20141125:eq1}
falls outside the reach of classical results such as the
Levinson theorem \cite{Lev48} or the complex Wentzel-Kramers-Brillouin
(WKB) method \cite{fedoryuk93},
since the most singular term $p(x,E)e_0$ is nilpotent.
In order to study the asymptotic behavior of solutions to \eqref{20141125:eq1}
we use a slight modification of a gauge transform originally introduced in \cite{Sun12},
and we then prove the existence of the desired solution
for the new ODE by means of a Volterra integral equation.

We now consider the required transformation. First, we note that the function $p(x,E)^\frac{1}{h^{\vee}}$ 
has the asymptotic expansion
$$p(x,E)^\frac{1}{h^{\vee}}=q(x,E)+O(x^{-1-\delta}),$$
where
\begin{equation}\label{eq:delta}
\delta=M(h^\vee(1+s)-1)-1>0, \qquad s=\lfloor \frac{M+1}{h^\vee M} \rfloor,
\end{equation}
and
\begin{equation}\label{20150128eq1}
q(x,E)=x^M+\sum_{j=1}^{s} c_j(E) x^{M(1-h^\vee j)}.
\end{equation}
For every $j=1,\dots,s$, the function $c_j(E)$ is a monomial of degree $j$ in $E$.
\begin{lemma}\label{20150103:lem1}
Let $\mc L(x,E)$ be the $\widehat{\mf g}$-valued connection defined in \eqref{20141020:eq1} and
let $h\in\mf{h}$ satisfy relations \eqref{20141020:eq2}.
Then, we have the following gauge transformation
\begin{equation}\label{eq:24ott03}
q(x,E)^{\ad h}\mc{L}(x,E)
=\partial_x+ q(x,E) \Lambda + \frac{\ell- M h}{x}+O(x^{-1-\delta})
\,,
\end{equation}
where $\Lambda=e_0+e$, the function $q(x,E)$ is given by \eqref{20150128eq1} and $\delta$ by \eqref{eq:delta}.
\end{lemma}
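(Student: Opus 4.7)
The plan is to compute the gauge transform $q^{\ad h}\mc L(x,E) = q(x,E)^{h}\,\mc L(x,E)\,q(x,E)^{-h}$ term by term. Interpreting $q^h$ as the scalar-valued group element acting diagonally on any finite-dimensional representation, and using that for any $a\in\mf g$ with $[h,a]=\lambda a$ one has $q^{\ad h}(a)=q^{\lambda}a$, the relations \eqref{20141020:eq2} yield $q^{\ad h}(e)=q\,e$ and $q^{\ad h}(e_0)=q^{-(h^\vee-1)}e_0$. Since $\ell,h\in\mf h$ commute, $\ell/x$ is unchanged, while conjugation of $\partial_x$ by the scalar $q^h$ produces the extra Maurer--Cartan type term $-h\,q'/q$. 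Collecting,
\[
q^{\ad h}\mc L(x,E)=\partial_x+\frac{\ell}{x}-h\,\frac{q'}{q}+q\,e+\frac{p(x,E)}{q(x,E)^{h^\vee-1}}\,e_0.
\]

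It then remains to verify two asymptotic identities, each with accuracy $O(x^{-1-\delta})$: namely $q'/q=M/x+O(x^{-1-\delta})$ and $p/q^{h^\vee-1}=q+O(x^{-1-\delta})$. For the first, I would write $q=x^{M}(1+u)$ with $u=\sum_{j=1}^{s}c_j(E)x^{-Mh^\vee j}=O(x^{-Mh^\vee})$ and differentiate logarithmically to obtain $q'/q=M/x+u'/(1+u)$, whose remainder is of order $x^{-Mh^\vee-1}$. For the second, by construction $q(x,E)$ is the truncation of the Puiseux expansion of $p^{1/h^\vee}$ up to the stated error $O(x^{-1-\delta})$, so $p^{1/h^\vee}=q+O(x^{-1-\delta})$; raising to the $h^\vee$-th power, dividing by $q^{h^\vee-1}$, and absorbing the subleading nonlinear terms yields the claim. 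Plugging both expansions into the displayed formula above gives $\partial_x+q\Lambda+(\ell-Mh)/x+O(x^{-1-\delta})$, as required.

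The only genuine obstacle is the bookkeeping: one must check that the integer $s=\lfloor (M+1)/(Mh^\vee)\rfloor$ built into \eqref{eq:delta} is precisely what forces $Mh^\vee\geq \delta$, so that the tail $O(x^{-Mh^\vee-1})$ coming from $q'/q$ is indeed absorbed by the common remainder $O(x^{-1-\delta})$. This is an elementary inequality check, and once in place the lemma follows by the termwise substitution above.
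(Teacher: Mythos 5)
Your proposal is correct and follows exactly the route the paper intends: the paper's proof is the one-line remark that the lemma ``follows by a straightforward computation using relations \eqref{20141020:eq2}'', and your termwise conjugation (including the Maurer--Cartan term $-h\,q'/q$ from $\partial_x$ and the two asymptotic checks $q'/q=M/x+O(x^{-1-Mh^\vee})$ and $p/q^{h^\vee-1}=q+O(x^{-1-\delta})$, with the inequality $Mh^\vee\geq\delta$ guaranteed by the choice of $s$) is precisely that computation written out. No gaps.
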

\begin{proof}
It follows by a straightforward computation using relations \eqref{20141020:eq2}.
\end{proof}


%
\begin{remark}
The transformation \eqref{eq:24ott03} differs from the one used in \cite{Sun12} and given by
\begin{equation*}\label{eq:24ott02}
x^{M\ad h}\mc{L}(x,E)
=\partial_x+  x^M \Lambda + \frac{\ell- M h}{x} - \frac{e_0 E}{x^{M(h^\vee-1)}}
\,.
\end{equation*}
The latter transformation fails to be the correct if $M (h^\vee-1)\leq 1 $. In fact, in this case
the term $ e_0 E/x^{M(h^\vee-1)}$ goes to zero slowly and alters the asymptotic behavior.
\end{remark}
We are then left to analyze the equation
\begin{equation}\label{eq:21ott01}
\Psi'(x)+ \left( q(x,E)\, \Lambda  + \frac{\ell-Mh}{x}+ A(x) \right) \Psi(x) =0
\,,
\end{equation}
where $A(x)=O(x^{-1-M h^\vee})$ is a certain matrix-valued function. From the general theory we expect the asymptotic behavior to depend on the primitive of $q(x,E)$, which is known as the action. In the generic case $\frac{M+1}{h^\vee M} \notin \bb{Z}_+$, the action is defined as
\begin{equation}\label{eq:actiongeneric}
S(x,E)=\int^{x}_0 q(y,E) dy \,,
\qquad x \in \widetilde{\mb C}
\,,
\end{equation}
where
we chose the branch
of $q(x,E)$ satisfying $q \sim |x|^{M}$ for $x$ real. In the case $\frac{M+1}{h^\vee M} \in \bb{Z}_+$ then formula \eqref{eq:actiongeneric} becomes
\begin{equation*}
 S(x,E)=\sum_{j=0}^{s-1} \int^{x}_0   c_j(E) y^{M(1-h^\vee j)} dy + c_s(E) \log x \, , \quad s= \frac{M+1}{h^\vee M} \,.
\end{equation*}
Existence and uniqueness of solutions to equation \eqref{20141125:eq1} (or equivalently, of \eqref{eq:21ott01})
depend on the properties of the element $\Lambda=e+e_0$ in the representation considered. We therefore introduce the following
\begin{definition}\label{def:maximal}
Let $A$ be an endomorphism of a vector space $V$.
We say that a eigenvalue $\lambda$ of $A$ is maximal if it is real,  its algebraic multiplicity
is one, and $\lambda > \Re\mu$ for every eigenvalue $\mu$ of $A$.
\end{definition}
\noindent
We are now in the position to state the main result of this section.%
\begin{theorem}\label{thm:asymptotic}
Let $V$ be an evaluation representation of $\widehat{\mf g}$, and let the matrix representing
$\Lambda\in\widehat{\mf g}$ in $V$ have a maximal eigenvalue
$\lambda$. Let  $\psi\in V$ be the corresponding unique (up to a constant) eigenvector. Then,
there exists a unique solution $\Psi(x,E):\widetilde{\mb{C}}\to V$
to equation \eqref{20141125:eq1}
with the following asymptotic behavior:
$$
\Psi(x,E)
=e^{-\lambda S(x,E)} q(x,E)^{-h} \big( \psi + o(1) \big)\quad \text{ as }\quad x \to +\infty
\,.
$$
Moreover, the same asymptotic behavior holds in the sector
$|\arg{x}| < \frac{\pi}{2(M+1)} $,
that is, for any $\delta>0$ it satisfies
\begin{equation}\label{20150113:eq1}
\Psi(x,E)
=e^{-\lambda S(x,E)} q(x,E)^{-h} \big( \psi + o(1) \big)
\,,
\quad \text{ in the sector } \,\, |\arg{x}| <\frac{\pi}{2(M+1)} -\delta
\,.
\end{equation}
The function $\Psi(x,E)$ is an entire function of $E$.
\end{theorem}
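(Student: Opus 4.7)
My approach is to reduce the ODE via a sequence of gauge transformations to a form where the asymptotic $\psi + o(1)$ can be extracted from a Volterra integral equation, leveraging the maximality of $\lambda$. First I would apply Lemma \ref{20150103:lem1} to pass from $\mc L(x,E)$ to the gauge-equivalent connection
\[
\partial_x + q(x,E)\Lambda + \frac{\ell-Mh}{x} + A(x), \qquad A(x) = O(x^{-1-\delta}),
\]
so that a solution $\Psi$ of $\mc L\Psi=0$ corresponds to $q^{-h}\Phi$ for a solution $\Phi$ of the gauged equation. A further substitution $\Phi(x,E) = e^{-\lambda S(x,E)} F(x,E)$ removes the leading exponential and reduces the target asymptotic to $F(x,E)\to\psi$; the equation for $F$ reads
\[
F'(x,E) + q(x,E)(\Lambda-\lambda I)\,F(x,E) + B(x)\,F(x,E) = 0, \qquad B(x) = \frac{\ell-Mh}{x} + A(x).
\]

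Since $\lambda$ is simple and maximal, there is a $\Lambda$-invariant splitting $V = \mb{C}\psi \oplus V_1$ on which every eigenvalue of $(\Lambda-\lambda I)|_{V_1}$ has real part bounded above by some $-c<0$. I would then recast the ODE as a Volterra integral equation with boundary condition at $+\infty$,
\[
F(x,E) = \psi - \int_x^{+\infty} K(x,y;E)\, B(y)\, F(y,E)\, dy,
\]
where the Green's kernel $K(x,y;E)$ equals the identity on $\mb{C}\psi$ and $e^{(\Lambda-\lambda I)(S(y,E)-S(x,E))}$ on $V_1$. The bound $\|K(x,y;E)|_{V_1}\| \le C e^{-c(S(y)-S(x))}$ for $y\ge x$ (together with the polynomial growth of $S$) makes the iteration a contraction on a suitable weighted Banach space of $V$-valued continuous functions on $[x_0,+\infty)$ for $x_0$ sufficiently large, yielding existence and uniqueness of $F$. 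The delicate point is that $B(x)\psi$ has a $\mb{C}\psi$-component of size only $O(1/x)$, which is borderline non-integrable at $+\infty$; I would absorb this by the auxiliary gauge $F=x^{-c_0}\widetilde F$, where $c_0$ is the scalar defined by projecting $(\ell-Mh)\psi$ onto $\mb{C}\psi$, so that the non-integrable diagonal source is absorbed into the ansatz and the residual perturbation becomes genuinely $L^1$ and amenable to the Volterra iteration.

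To extend the asymptotic into the sector $|\arg x|<\pi/(2(M+1))-\delta$, I would rerun the Volterra argument along rays $x=\rho e^{i\theta}$, noting that $\Re S(\rho e^{i\theta}) \sim \cos((M+1)\theta)\,\rho^{M+1}/(M+1)$ is positive and tends to $+\infty$ precisely in the stated sector, so the kernel estimates carry over with bounds uniform on compact subsectors, and analyticity of $\Psi$ in the sector follows by gluing the solutions along different rays. Entirety in $E$ is then immediate from the construction, since the kernel, source and all successive Volterra iterates depend holomorphically on $E$ with locally uniform bounds, so an application of Morera's theorem or direct differentiation under the integral sign concludes. I expect the main obstacle to be the borderline $O(1/x)$ behaviour of $B(x)$ at infinity: this is exactly what the auxiliary $x^{-c_0}$ gauge, made possible by the simplicity of $\lambda$ in the $\Lambda$-spectrum, is designed to resolve.
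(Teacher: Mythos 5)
Your overall strategy (gauge away the principal part, pass to a Volterra equation at infinity, exploit the spectral gap of $\Lambda-\lambda I$) is the right one and is close in spirit to the paper's, but the way you handle the non-integrable $\frac{\ell-Mh}{x}$ term contains a genuine gap. You absorb its ``diagonal'' part into an ansatz $F=x^{-c_0}\widetilde F$, where $c_0$ is the component of $(\ell-Mh)\psi$ along $\mb{C}\psi$ in the $\Lambda$-eigendecomposition. But if $c_0\neq 0$ this does not prove the theorem: it produces a solution with asymptotics $e^{-\lambda S}q^{-h}x^{-c_0}(\psi+o(1))$, which contradicts the stated normalization $q^{-h}(\psi+o(1))$. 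So your argument is incomplete unless you also prove $c_0=0$, and that is a nontrivial Lie-theoretic fact, not a free normalization. It is exactly the content of the paper's Lemma \ref{20150103:lem2}: with $N=\sum_i(\ell_i-Ma_i)f_i$ one has $[N,\Lambda]=\pm(\ell-Mh)$ (using $[f_i,e_0]=0$ and $[f_i,e_j]=\delta_{ij}h_i$), i.e.\ $\ell-Mh\in\im(\ad\Lambda)$; projecting $[\Lambda,N']$ onto the simple eigenline $\mb{C}\psi$ gives $\lambda P_\psi N'P_\psi-\lambda P_\psi N'P_\psi=0$, whence $c_0=0$. The paper uses this same identity more efficiently, via the gauge $e^{\alpha(x)\ad N}$ with $\alpha=(xq)^{-1}$, to remove the \emph{entire} $1/x$ term at once, reducing the perturbation to $O(x^{-1-M})$, after which the Volterra operator has spectral radius zero in the plain sup-norm and no dichotomy splitting is needed.

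A second, smaller inaccuracy: even after subtracting $c_0/x$, the residual perturbation is \emph{not} ``genuinely $L^1$'' --- the off-diagonal blocks $P_\psi(\ell-Mh)P_{V_1}$, $P_{V_1}(\ell-Mh)P_\psi$ and $P_{V_1}(\ell-Mh)P_{V_1}$ of the $1/x$ term survive. Your scheme can still close, but only because the exponential dichotomy forces $P_{V_1}F=O(x^{-1-M})$ a posteriori, which makes $\int^\infty y^{-1}|P_{V_1}F(y)|\,dy$ finite; this decay must be built into the weighted Banach space for the contraction to work (a standard Levinson-type setup, but not the one you describe). I would recommend replacing the $x^{-c_0}$ device by the second gauge transformation of Lemma \ref{20150103:lem2}, which resolves both issues simultaneously and also makes the sectorial statement and entirety in $E$ immediate by working, as the paper does, in the single space of bounded holomorphic functions on $D_\e=S^{-1}\{|\arg z|\le\frac{\pi}{2}-\e\}$ rather than ray by ray.
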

\begin{remark}\label{20150107:rem1}
We will prove in Section \ref{sec:psi-sistem} that the hypothesis of the theorem are satisfied
in the $ADE$ case for the representations $V^{(i)}$, $\bigwedge^2V_{\frac12}^{(i)}$, $M^{(i)}$
and $W^{(i)}$, which were introduced in the previous section.
\end{remark}
\begin{remark}
In the case $M(h^\vee-1)>1$, the asymptotic expansion of the solution $\Psi$ in the above theorem reduces to
\begin{equation*}
\Psi(x)
=e^{-\lambda \frac{x^{M+1}}{M+1}} q(x,E)^{-h}\big( \psi + o(1) \big)
\,,
\text{ in the sector } |\arg{x}| <\frac{\pi}{2(M+1)}
\,.
\end{equation*}
\end{remark}
\noindent
Before proving Theorem \ref{thm:asymptotic}, we apply it to  prove the existence of linearly
independent solutions to equation \eqref{20141125:eq1}.
First, note that for any $k\in\mb R$ such that $|k| < \frac{h^\vee (M+1)}{2}$, the function
\begin{equation}\label{eq:Psik}
 \Psi_k(x,E)=\omega^{-k h} \Psi(\omega^kx,\Omega^kE)\,,
 \quad x \in \bb{R}_+
\end{equation}
 defines, by analytic continuation, a solution $\Psi_k: \widetilde{\mb{C}} \to V_k $
 of equation \eqref{20141125:eq1}
 in the representation $V_k$. Theorem \ref{thm:asymptotic} implies the following result.
\begin{corollary}\label{cor:asymptotic}
For any $k\in\mb R$ such that $|k| < \frac{h^\vee (M+1)}{2}$, on the positive real semi-axis the function $\Psi_k$ has the asymptotic behaviour
\begin{equation}\label{eq:asymppsik}
\Psi_k(x,E)=e^{-\lambda \gamma^k S(x,E)}q(x,E)^{-h} \gamma^{-k h}(\psi+o(1))\,, \qquad x \gg 0 \,,
\end{equation}
where $\psi$ is defined as in Theorem \ref{thm:asymptotic} and $\gamma=e^{\frac{2 \pi i}{h^\vee}}$.
Moreover, let $l \in \bb{Z}_+$ be such that $1\leq l \leq \frac{h^\vee}{2}$.
Then, the functions $\Psi_{\frac{1-l}{2}},\dots,\Psi_{\frac{l-1}{2}}$
are linearly independent solutions to equation \eqref{20141125:eq1}
in the representation $V_{\frac{l+1}{2}}$.
\end{corollary}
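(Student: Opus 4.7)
The plan is to deduce the asymptotic of $\Psi_k$ from Theorem \ref{thm:asymptotic} by transporting the asymptotic of $\Psi$ through the substitution $(x,E)\mapsto(\omega^k x,\Omega^k E)$, and then to extract linear independence from the distinctness of the leading exponential rates. Since consecutive values in the list $\{(1-l)/2,\dots,(l-1)/2\}$ differ by an integer, all these $\Psi_k$ are valued in a single evaluation representation, coinciding with $V_{(l+1)/2}$, so the linear independence assertion is meaningful.

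The transformation step amounts to a direct calculation. From \eqref{20150128eq1}, using that each $c_j(E)$ is a monomial of degree $j$ and that $\Omega=\omega^{h^\vee M}$, a term-by-term check yields the scaling $q(\omega^k x,\Omega^k E)=\omega^{kM}q(x,E)$. After the change of variable $y\mapsto\omega^k y$ in \eqref{eq:actiongeneric} and the identity $\omega^{M+1}=\gamma$, one finds $S(\omega^k x,\Omega^k E)=\gamma^k S(x,E)$. Plugging these into Theorem \ref{thm:asymptotic} at the rotated point $\omega^k x$ and absorbing $\omega^{-kh}\cdot\omega^{-kMh}=\omega^{-(M+1)kh}=\gamma^{-kh}$ into the gauge yields \eqref{eq:asymppsik}.

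The main obstacle is that Theorem \ref{thm:asymptotic} only establishes the asymptotic of $\Psi$ in the narrow sector $|\arg x|<\pi/(2(M+1))$, which corresponds to $|k|<h^\vee/4$, whereas the claimed range of $k$ is much larger. For $|k|\geq h^\vee/4$ the rotated ray $\omega^k\bb R_+$ leaves this sector and crosses Stokes rays of the unperturbed problem, where in principle additional terms could appear in the expansion. To reach the full range $|k|<h^\vee(M+1)/2$, I would argue that the Volterra construction underlying Theorem \ref{thm:asymptotic} in fact extends the asymptotic up to the next Stokes ray at $|\arg z|=\pi/(M+1)$, and then iterate via the symmetry \eqref{20141020:eq4}: on each rotated sector $\Psi_k$ is uniquely characterized by the prescribed formal leading behavior, and the extensions match on overlaps by uniqueness of the formal expansion.

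For the linear independence assertion the hypothesis $l\leq h^\vee/2$ gives $|k|\leq(l-1)/2<h^\vee/4$, so the asymptotic \eqref{eq:asymppsik} applies directly without any extension. The phases $\gamma^k=e^{2\pi ik/h^\vee}$ are pairwise distinct for the relevant $k$, because the integer differences satisfy $|k_1-k_2|\leq l-1<h^\vee$; consequently the rates $-\lambda\gamma^k$ are all distinct. As $S(x,E)\to+\infty$ on $\bb R_+$, a standard exponential dominance argument applied to a hypothetical nontrivial relation $\sum_k c_k\Psi_k(x,E)=0$, by peeling off the terms with largest $\Re(-\lambda\gamma^k)$ and then separating by imaginary parts when real parts coincide, forces each $c_k=0$.
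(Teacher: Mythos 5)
Your derivation of \eqref{eq:asymppsik} follows exactly the paper's route: compute the scalings of $q$ and $S$ under $(x,E)\mapsto(\omega^k x,\Omega^k E)$, apply Theorem \ref{thm:asymptotic} at the rotated point, and absorb $\omega^{-k(M+1)h}=\gamma^{-kh}$. You are in fact more careful than the printed proof, which records the scaling as $q(\omega^k x,\Omega^k E)=\omega^{M}q(x,E)$ where the correct factor is $\omega^{kM}$ (as you have it, and as the final formula requires). The linear independence argument is also the intended one: the paper merely observes that the $\gamma^{k}$ are pairwise distinct and concludes; your exponential-dominance peeling, together with the observation that $l\le h^\vee/2$ forces $|k|\le(l-1)/2<h^\vee/4$ so that every relevant ray stays strictly inside the sector $|\arg x|<\pi/(2(M+1))$, is a correct fleshing-out of that step.

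The one step that does not hold up is your proposed extension of \eqref{eq:asymppsik} to the full range $|k|<h^\vee(M+1)/2$. Widening to three adjacent Stokes sectors (Remark \ref{rem:3sectors}) reaches only $|\arg x|<3\pi/(2(M+1))$, i.e.\ $|k|<3h^\vee/4$, and the subsequent iteration ``matching on overlaps by uniqueness of the formal expansion'' cannot be pushed further: once $\omega^k\mathbb{R}_+$ leaves the region where $e^{-\lambda S}$ is recessive, a solution is no longer uniquely determined by that leading behavior (any more recessive solution can be added without altering it), and conversely the analytic continuation of $\Psi$ generically picks up dominant contributions --- this is exactly the Stokes phenomenon, and it is why the lateral connection problem mentioned in Remark \ref{rem:3sectors} is nontrivial. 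So the matching argument establishes nothing beyond the recessive region. In fairness, the paper's own proof is silent on this and simply invokes Theorem \ref{thm:asymptotic} for every $k$ in the stated range; and, as you correctly note, the restricted range covers every use actually made of the corollary (the independence statement, and $k=\pm\tfrac12$ in the $\Psi$-system). But the iteration step as you describe it should not be presented as a proof of the asymptotic over the whole stated interval of $k$.
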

\begin{proof}
A simple computation shows that $S(\omega^kx,\Omega^kE)=\gamma^k S(x,E)$, and similarly $q(\omega^k x, \Omega^k E)=\omega^Mq(x,E)$.
Therefore by Theorem \ref{thm:asymptotic},
$$
\Psi(\omega^kx,\Omega^kE)=\omega^{-M h}e^{- \gamma^k\lambda   S(x,E) }q(x,E)^{-h}( \psi +o(1))\,,
\qquad x \gg 0 \,.
$$
Hence, on the positive real semi axis we have 
$$
\Psi_k(x,E)=e^{- \gamma^k\lambda   S(x,E) } q(x,E)^{-h} \gamma^{-k h}( \psi +o(1))\,,
\qquad x \gg 0 \,,
$$
where we have used the identity $\omega^{M+1}=\gamma$.
Note that $\gamma^{k}$, $k\in \lbrace \frac{1-l}{2},\dots,\frac{l-1}{2} \rbrace$,
are pairwise distinct.
Therefore, the vectors $\psi_k$ are linearly independent. This concludes the proof.
\end{proof}

\begin{remark}\label{rem:3sectors}
A slight variation of the proof of Theorem \ref{thm:asymptotic} presented below allows one to prove
that the asymptotic behavior of $\Psi$ holds on a larger sector
of the complex plane, consisting of three adjacent Stokes sectors.
This refined result is necessary to consider the lateral connection problem, where one considers
the linear relations among solutions that are subdominant
in different Stokes sectors. Such a problem leads naturally to another instance of the ODE/IM correspondence,
namely to spectral determinants related to the transfer matrix of integrable systems, see e.g. \cite{dtba}.
\end{remark}

\subsection{Proof of Theorem \ref{thm:asymptotic}}\label{sec:thm}
To complete our analysis, we need to further simplify  equation \eqref{eq:21ott01}
with a new gauge transformation.
Let us denote by $\widetilde{\mc L}(x,E)=q(x,E)^{\ad h}\mc L(x,E)$, where $\mc L(x,E)$ and
$h$ are as in Lemma \ref{20150103:lem1}.

\begin{lemma}\label{20150103:lem2}
Let $\ell=\sum_{i\in I}\ell_ih_i$ and $h=\sum_{i\in I}a_ih_i$, with $\ell_i,a_i\in\mb C$,
and consider the element $N=\sum_{i\in I}(\ell_i-Ma_i)f_i\in\mf g\,.$
Then we have the following gauge transformation:
$$
e^{\alpha(x)\ad N}\widetilde{\mc L}(x,E)
=\partial + q(x,E)\,\Lambda+O(x^{-1-M})\,,
$$
where $\alpha(x)=\left(x\,q(x,E)\right)^{-1}\,.$
\end{lemma}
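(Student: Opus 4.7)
I would compute the conjugation of $\widetilde{\mc L}(x,E)$ by $\tilde g = e^{\alpha(x) N}$ term by term. Because $N$ commutes with itself, conjugation by $\tilde g$ transforms $\partial_x$ into $\partial_x - \alpha'(x) N$ and acts on any matrix-valued coefficient $A(x)$ via $e^{\alpha(x)\ad N}A(x)$. Combining this with Lemma \ref{20150103:lem1} gives
\begin{equation*}
e^{\alpha \ad N}\widetilde{\mc L} = \partial_x - \alpha' N + e^{\alpha \ad N}(q\Lambda) + \frac{1}{x}\,e^{\alpha \ad N}(\ell - Mh) + O(x^{-1-\delta}).
\end{equation*}

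The heart of the argument is a single commutator computation. Using the affine Chevalley relations $[f_i, e_0] = 0$ for $i \in I$ (which hold because $-\theta - \alpha_i$ is not a root of $\mf g$, cf.\ the paragraph preceding \eqref{20151129:eq1}) together with $[f_i, e_j] = \delta_{ij}h_i$ for $i, j \in I$, one obtains the key identity
\begin{equation*}
[N, \Lambda] \;=\; \sum_{i\in I}(\ell_i - Ma_i)[f_i, e + e_0] \;=\; -(\ell - Mh).
\end{equation*}
Since the choice $\alpha(x) = 1/(x\,q(x,E))$ yields $q(x,E)\alpha(x) = 1/x$, the $k=1$ term in the adjoint expansion of $e^{\alpha \ad N}(q\Lambda)$ is exactly $-\frac{\ell - Mh}{x}$, which cancels the $k=0$ term of $\frac{1}{x}\,e^{\alpha \ad N}(\ell - Mh)$. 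This cancellation is the content of the lemma.

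What remains, and what I expect to be the main technical step, is bounding the surviving terms. Since $q(x,E) \sim x^M$ as $x \to +\infty$, we have $\alpha \sim x^{-1-M}$ and $\alpha' \sim x^{-2-M}$, so $\alpha' N = O(x^{-2-M})$. For $k \geq 2$, the $k$-th term of $e^{\alpha \ad N}(q\Lambda)$ carries a factor $q\alpha^k \sim x^{-k(1+M)+M}$ and is therefore $O(x^{-M-2})$; similarly, the $k \geq 1$ contributions to $\frac{1}{x}\,e^{\alpha \ad N}(\ell - Mh)$ are bounded by $\alpha/x \sim x^{-M-2}$. Together with the $O(x^{-1-\delta})$ tail inherited from Lemma \ref{20150103:lem1}, which is absorbed into $O(x^{-1-M})$ by the choice of $s$ in \eqref{eq:delta}, the total error is $O(x^{-1-M})$, as required.
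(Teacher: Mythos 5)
Your computation is exactly the one the paper leaves implicit (its proof is the single sentence that the claim ``follows by a straightforward computation'' from $[f_i,e_0]=0$ and the $[f_i,e_j]$ relation), and the heart of your argument --- the identity $[N,\Lambda]=-(\ell-Mh)$ combined with $q(x,E)\alpha(x)=1/x$, so that the $k=1$ term of the adjoint expansion of $e^{\alpha\ad N}(q\Lambda)$ cancels the residue term $\tfrac{\ell-Mh}{x}$ --- is correct, as are the order estimates for $\alpha'N$ and for the $k\geq 2$ terms. (A minor sign remark: with the conventions of \eqref{eq:chevalley} one has $[f_i,e_j]=-\delta_{ij}h_i$; the paper's own proof misquotes this sign and you copy it, but your final formula $[N,\Lambda]=-(\ell-Mh)$ is the one consistent with the correct sign, so nothing downstream is affected.)

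The one step I would not accept as written is the last one: the assertion that the $O(x^{-1-\delta})$ tail inherited from Lemma \ref{20150103:lem1} ``is absorbed into $O(x^{-1-M})$ by the choice of $s$ in \eqref{eq:delta}''. That requires $\delta\geq M$, which does not follow from \eqref{eq:delta} and is false for some $M$. For instance, for $\mf g=A_1$ (so $h^\vee=2$) and $M=2$ one finds $s=0$, $q(x,E)=x^2$, and the remainder in \eqref{eq:24ott03} is exactly $-Ex^{-2}e_0$, i.e.\ $\delta=1<M$; since $[N,e_0]=0$, this term passes through the conjugation by $e^{\alpha N}$ unchanged and is certainly not $O(x^{-3})$. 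So the honest output of your (and the paper's) computation is $\partial_x+q(x,E)\Lambda+O(x^{-1-\min(M,\delta)})$. This imprecision sits in the paper's statement itself and is harmless for everything that follows --- the Volterra argument only uses that the remainder is integrable at infinity, i.e.\ $O(x^{-1-\e})$ for some $\e>0$, which $\min(M,\delta)>0$ guarantees --- but your write-up should either state the corrected exponent or make $\delta\geq M$ an explicit hypothesis rather than claiming it follows from the definition of $s$.
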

\begin{proof}
It follows by a straightforward computation using the explicit form of $\widetilde{\mc L}(x,E)$ given in
equation \eqref{eq:24ott03}, together with the definitions of $N$ and $\alpha(x)$ and the commutation relations (see \eqref{eq:chevalley})
$$
[f_i, e_0]=0
\qquad\text{and}\qquad
[f_i,e_j]=\delta_{ij}h_i\,,
\qquad\text{for every }i,j\in I\,.
$$
\end{proof}

\noindent
By Lemma \ref{20150103:lem2} and equation \eqref{20141230:eq1}, it follows that the ODE \eqref{eq:21ott01} is transformed into
\begin{equation}\label{eq:22ott01}
\widetilde{\Psi}'(x)+ \left(q(x,E)\, \Lambda  + \widetilde{A}(x) \right) \widetilde{\Psi}(x) =0
\,,
\end{equation}
where $\widetilde{A}(x)=O(x^{-1-M})$,
and  $\widetilde{\Psi}(x)=G (x)\Psi(x)$,
with $G(x)=e^{\alpha(x)N}$.
Since $G(x)=\mbbm1+o(x^{-M})$,
then we look for a solution to equation \eqref{eq:22ott01} with the same
asymptotic behavior \eqref{20150113:eq1} as the solution $\Psi(x)$ to equation \eqref{eq:21ott01}.
Finally, we define $\Phi(x)=e^{\lambda S(x,E)} \widetilde{\Psi}(x)$.
Then, it is easy to show that equation \eqref{eq:22ott01} takes the  form
\begin{equation}\label{eq:21ott02}
\Phi'(x)+ \left( q(x,E)\,\tilde{\Lambda}+ \widetilde{A}(x) \right) \Phi(x) =0
\,,
\end{equation}
where $\tilde{\Lambda}=\Lambda-\lambda \mbbm1$. Note that $\tilde{\Lambda}\,\psi=0$.
The problem is now reduced to prove the existence
of a solution to equation \eqref{eq:21ott02} satisfying, for every $\e >0$, the asymptotic condition
\begin{equation}\label{20141129:eqa1}
\Phi(x)= \psi + o(1) \, ,\quad \text{ in the sector } |\arg{x}| <\frac{\pi}{2(M+1)} -\e
\,.
\end{equation}

To proceed with the proof we need to introduce some elements of WKB analysis, for which we refer to \cite[Chap. 3]{fedoryuk93}.
We fix $\kappa>0$ (possibly depending on $E$)  big enough such that the
Stokes sector 
\begin{equation*}
 \Sigma= \lbrace x \in\widetilde{\mb C}\mid\Re S(x,E) \geq  \Re S(\kappa,E) \rbrace
\end{equation*}
has the following two properties:
$S$ is a bi-holomorphic map from $\Sigma$
to the right half-plane  $\lbrace z \in \bb{C}\mid \Re z \geq 0 \rbrace $, and 
$\Sigma$ coincides asymptotically with the sector $|\arg x|< \frac{\pi}{2(M+1)}$. In other words,
any ray of argument $|\arg x|< \frac{\pi}{2(M+1)}$ eventually lies inside $\Sigma$, while any ray
of argument $|\arg x|> \frac{\pi}{2(M+1)}$ eventually lies in the complement of $\Sigma$. For any
$\e>0$, we define $D_{\e}=S^{-1}\lbrace |\arg z |\leq \frac{\pi}{2} -\e \rbrace $, and we introduce
the space $\mc{B}$  of holomorphic bounded functions $U:D_{\e} \to V$ for which the limit
$\lim_{x \to \infty} U(x)=U(\infty)$ is well-defined. The space $\mc{B}$ is complete with respect
to the norm $\|U\|_{\infty}=\sup_{x \in D_{\e}} |U(x)|$.
\\

\noindent
We construct the solution of equation \eqref{eq:21ott02}
with the desired asymptotic behavior \eqref{20141129:eqa1}
as the solution of the following integral equation in the Banach space $\mc B$:
\begin{equation}\label{eq:21ott05}
\Phi(x)= K[\Phi](x)+ \psi^{(i)}\,,
\end{equation}
where the Volterra integral operator $K$ is defined as
$$
K [\Phi] (x)
= \int_{x}^{\infty}e^{-\tilde{\Lambda}\big( S(x,E)- S(y,E) \big) } \widetilde{A}(y) \Phi(y) dy\,.
$$
The integral above is computed along any admissible path $\mathit{c}$, and we say that a
(piecewise differentiable) parametric curve $\mathit{c}$ is admissible
if $\int_0^t|\dot{\mathit{c}}(s)|ds=O(|\mathit{c}(t)|)$ and $\Re S(\mathit{c}(t),E)$ is a
non decreasing function. For example,  $S^{-1} (t+x) $ is an admissible path \cite{fedoryuk93}.
Note that a solution $\Phi(x)$ of equation \eqref{eq:21ott02}, with the asymptotic behavior
\eqref{20141129:eqa1} belongs to $\mc{B}$.  Moreover, it satisfies equation \eqref{eq:22ott01}
if and only if it satisfies the Volterra integral equation \eqref{eq:21ott05}.
Indeed, provided that $K$ is a continuous operator, we have that
$\lim_{x\to \infty}K[\Phi](x)=0$ and
\begin{align*}
& K[\Phi]'(x)= - \widetilde{A}(x)\Phi(x) -  q(x,E)\, \tilde{\Lambda}  K[\Phi](x)\\
&= - \widetilde{A}(x)\Phi(x) -   q(x,E)\, \tilde{\Lambda} \left(\Phi(x)-\psi^{(i)}\right)\\
&= - \widetilde{A}(x)\Phi(x) - q(x,E)\, \tilde{\Lambda} \Phi(x)
 \, ,
\end{align*}
where the last equality follows from the fact that $\tilde{\Lambda}\psi^{(i)}=0$.
\\

\noindent
In order to prove that the solution of the integral equation \eqref{eq:21ott05} exists and
it is unique we start showing that $K$ is a continuous operator on $\mc{B}$
and its spectral radius is zero, i.e. $\lim_{n \to \infty} \|K^{n}\|^{\frac1n}=0$. By hypothesis,
all eigenvalues of $\Lambda$ have non-positive real part. Since $\Lambda$ is semi-simple, i.e. diagonalizable,
we deduce that
there exists a $\beta>0$ such that for any $\kappa\geq 0$ and any $\phi \in V$,
\begin{equation}\label{eq:21ott03}
| e^{ \kappa \tilde{\Lambda} } \phi | \leq \beta |\phi|
\,.
\end{equation}
Since $|\widetilde{A}(x)|=O(x^{-1-M})$,
we have that
$|\widetilde{A}| \in L^{1}(\mathit{c}, |\dot{\mathit{c}}(t)|dt)$
for every admissible path $\mathit{c}$ , and the function
\begin{equation}\label{eq:21ott03_bis}
\rho_{\mathit{c}}(x)=\beta \int_x^{ \infty} |\widetilde{A}(\mathit{c}(s))||\dot{\mathit{c}}(s)| ds
\end{equation}
is well-defined and satisfies $\lim_{x \to \infty} \rho_\mathit{c}(x)=0$.
Moreover, the inequality \eqref{eq:21ott03} implies that
\begin{equation*}
\left|\int_x^{\infty}e^{-\tilde{\Lambda}\big( S(x,E)-S(y,E) \big)} \widetilde{A}(y) U(y) dy\right|
 \leq \rho_{\mathit{c}}(x)\|U\|_{\infty} ,
\end{equation*}
for every $U\in\mc B$.
By \eqref{eq:21ott03} and \eqref{eq:21ott03_bis}
it follows \cite{fedoryuk93} that $K[U]$ is path-independent and thus well defined.
Furthermore, we have
$$
|K[U](x)| \leq \|U\|_{\infty} \rho(x)
\quad\text{ and }  \|K\|=\sup_{|U|_{\infty}=1}\|K[U]\|_{\infty}\leq \bar\rho
\,,
$$
where
$$
\rho(x)=\inf_{\mathit{c}\text{\scriptsize{ admissible}}}\rho_{\mathit{c}}(x)\,,
\qquad
\bar\rho=\sup_{x \in D_{\e}}\rho(x)
\,.
$$
Note that $\rho(x)$ is a bounded function such that $\lim_{x \to \infty} \rho(x)=0$.
Hence, we conclude that $K$ is a well-defined bounded operator on $\mc{B}$. By definition,
$K^n[U]$ can be written in terms of $K(x,y)$ as
\begin{equation*}
K^n[U](x)
=\int_x^{\infty} \dots \int_{y_{n-1}}^{\infty} K(x,y_1)\dots K(y_{n-1},y_n) U(y_n) dy_1 \dots dy_n
\,,
\end{equation*}
where $K(x,y)$ is the matrix-valued function
$$
K(x,y)=e^{-\tilde{\Lambda} \big( S(x,E)-S(y,E) \big)} \widetilde{A}(y)
\,,
$$
and from this it follows that $K^n[U](x)$ can be estimated as
\begin{equation*}
 |K^n[U](x)|
=\|U\|_{\infty}  \int_x^{\infty} \!\!\!\!\dots \!\int_{y_{n-1}}^{\infty}
|\widetilde{A}(y_1)|\dots |\widetilde{A}(y_n)|  dy_1 \dots dy_n
\leq \|U\|_{\infty} \frac{\rho^n(x)}{n!}
\,.
\end{equation*}
Thus
\begin{equation}
 \label{eq:22ott03}
\|K^n\|\leq \frac{\bar{\rho}^{\,n}}{n!} \, ,
\end{equation}
and the series
$$
\Phi= \sum_{n\in\mb Z_+} K^n[\psi]
$$
converges in $\mc{B}$. Clearly $\Phi$ satisfies the integral equation \eqref{eq:21ott05},
since
$$
K[ \Phi]= \sum_{n=1}^{\infty} K^n[\psi]= \Phi - \psi
\,.
$$
Moreover, $\Phi$ is the unique solution of equation \eqref{eq:21ott05}, for if $\tilde{\Phi}$ is another solution then
$$K^n[\Phi-\tilde{\Phi}]=\Phi - \tilde{\Phi} \,,
\qquad\text{for every }n \in \bb{Z}_+
\,.
$$
By equation \eqref{eq:22ott03}, it follows that $\Phi=\tilde{\Phi}$.
\\

\noindent
It remains to prove that $\Phi$ is an entire function of $E$.
This follows, by a standard perturbation theory argument,
from the fact that $\widetilde{A}$ and $S$ are holomorphic functions of $E$ in $D_{\e}$. Theorem \ref{thm:asymptotic} is proved.

\section{The \texorpdfstring{$\Psi$}{Psi}-system}\label{sec:psi-sistem}
In this section, for a simple Lie algebra $\mf g$ of $ADE$ type, we prove an algebraic identity known as 
$\Psi$-system, which was conjectured first in \cite{dorey07}. 
Recall from Section \ref{sec:kac_moody} that we can write
\begin{equation}\label{eq:pisa_direct_sum}
\bigwedge^2V^{(i)}_{\frac{1}{2}} \cong W^{(i)}\oplus U\,,
\qquad
M^{(i)}=\bigotimes_{j\in I}\big(V^{(j)}\big)^{\otimes B_{ij}} \cong  W^{(i)}\oplus\widetilde U\,,
\end{equation}
where the representation $W^{(i)}=L(\eta_i)_{\frac{p(i)+1}{2}}$, $i\in I$, was defined in equation \eqref{eq:evWi}, and
$U$ and $\widetilde U$ are direct sums of all the irreducible representations
different from $W^{(i)}$. Due to \eqref{m_i-affine}, for every $i\in I$ we have a morphism of representations 
$$
m_i:\bigwedge V_{\frac12}^{(i)}\to M^{(i)}
\,, \quad \ker m_i=U \,.
$$
The $\Psi$-system is a set of quadratic relations, realized by means of the morphisms $m_i$, among the subdominant
solutions to the linear ODE
\eqref{20141125:eq1} defined on the representations $\bigwedge V_{\frac12}^{(i)}$ and $M^{(i)}$.

In order to prove the main result of this section, we
need to study the maximal eigenvalue (and the corresponding
eigenvector) of $\Lambda$ in the representations
$\bigwedge^2V^{(i)}_{\frac{1}{2}}$, $M^{(i)}$
and $W^{(i)}$. To this aim, we recall some further facts from the theory of simple Lie algebras and
simple Lie groups. First, we need the following
\begin{lemma}\label{themostimportantlemma}
Let $h\in \mf h$ satisfy relations \eqref{20141020:eq2}, and let us set 
$\gamma=e^{\frac{2\pi i}{h^\vee}}$.
Then, for every $k\in\mb C$, the following formula holds:
\begin{equation}\label{20150108:eq3}
\gamma^{k\ad h}\Lambda=\gamma^k\mc M_{-k}(\Lambda)
\,.
\end{equation}
In particular, if $\psi$ is an eigenvector of $\Lambda$ in an evaluation representation $V$, then
$\psi_k=\gamma^{-k h}\psi$ is an eigenvector of $\Lambda$ in the representation $V_k$, and we have
\begin{equation}\label{20150108:eq6}
\Lambda \psi=\lambda \psi \qquad
\text{if and only if}\qquad  \Lambda \psi_k = \gamma^k \lambda \psi_k \, .
\end{equation}
\end{lemma}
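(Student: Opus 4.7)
The plan is a direct computation in two stages, both based on the defining relations \eqref{20141020:eq2} of $h$.

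For the first identity \eqref{20150108:eq3}, I would split $\Lambda = e + e_0$ and compute each side on the two summands separately. The relations $[h,e]=e$ and $[h,e_0]=-(h^\vee-1)e_0$ say that $e$ and $e_0$ are eigenvectors of $\ad h$, so $\gamma^{k\ad h}$ multiplies them by $\gamma^k$ and $\gamma^{-k(h^\vee-1)}$ respectively. On the right-hand side, the automorphism $\mc M_{-k}$ fixes $\mf g$ pointwise (hence $e$) and, since $e_0 = a\otimes t$ by \eqref{20151129:eq1}, sends $e_0$ to $\gamma^{-kh^\vee} e_0$. Multiplying by the overall factor $\gamma^k$ matches the two expressions term by term, yielding $\gamma^k e + \gamma^{-k(h^\vee-1)} e_0$ on both sides.

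For the eigenvalue statement \eqref{20150108:eq6}, I would reinterpret \eqref{20150108:eq3} at the level of representations. Swapping $k \leftrightarrow -k$ in \eqref{20150108:eq3} gives the operator identity $\gamma^{-kh}\,\Lambda\,\gamma^{kh} = \gamma^{-k}\mc M_k(\Lambda)$, and the observation recalled in the paragraph preceding \eqref{20141128:eq1} — that the action of $\mc M_k(\Lambda)$ on the underlying space of $V$ coincides with the action of $\Lambda$ on $V_k$ — converts this into the operator identity $\gamma^{-kh}\,\Lambda|_{V}\,\gamma^{kh} = \gamma^{-k}\Lambda|_{V_k}$ acting on the common underlying vector space. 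Applying both sides to $\psi$, using $\Lambda\psi = \lambda\psi$, and rearranging, one obtains $\Lambda|_{V_k}(\gamma^{-kh}\psi) = \gamma^k\lambda(\gamma^{-kh}\psi)$, which is exactly the claim; the converse implication in \eqref{20150108:eq6} follows at once from the invertibility of $\gamma^{-kh}$.

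The essentially only delicate point is keeping the sign conventions straight: the loop-parameter shift produced by $\mc M_{\pm k}$ must be matched correctly with the definition $V_k = V(e^{2\pi i k})$, so that an operator on $V$ obtained after applying $\mc M_k$ is identified with the $\Lambda$-action on $V_k$ with the correct power of $\gamma$. Once this bookkeeping is settled, both assertions reduce to straightforward consequences of \eqref{20141020:eq2} and the definition of $\mc M_k$; no representation-theoretic input beyond the eigenvalues of $\ad h$ on $e$ and $e_0$ is needed.
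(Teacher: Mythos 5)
Your computation is correct and matches the paper's approach: the paper's proof of this lemma consists of the single line ``It follows directly from the commutation relations \eqref{20141020:eq2}'', and your argument is exactly the direct verification that is being left to the reader, i.e.\ comparing the $\ad h$-eigenvalues of $e$ and $e_0$ with the action of $\mc M_{-k}$ on $t$, and then conjugating by $\gamma^{kh}$ on an evaluation representation to transfer eigenvectors from $V$ to $V_k$. The only cosmetic point is that in the second step one applies the rearranged identity $\Lambda|_{V_k}=\gamma^{k}\gamma^{-kh}\Lambda|_{V}\gamma^{kh}$ to the vector $\psi_k=\gamma^{-kh}\psi$ rather than to $\psi$ itself, which is clearly what you intend.
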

\begin{proof}
It follows directly from the commutation relations \eqref{20141020:eq2}.
\end{proof}
Now consider the element $\bar\Lambda=\Lambda|_{t=1}\in\mf g$, which is well known \cite{Kos59}
to be a regular semisimple element of $\mf g$. Therefore, its
centralizer $\mf g^{\bar\Lambda}=\{a\in\mf g\mid [\bar\Lambda,a]=0\}$
is a Cartan subalgebra of $\mf g$, which we denote $\mf h'$.
Due to Lemma \ref{themostimportantlemma}, it follows that
\begin{equation}\label{eq:lambdacyclic}
\gamma^{\ad h} \bar \Lambda=\gamma \bar\Lambda\,,
\end{equation}
where $h\in\mf h$ is the element introduced in Section \ref{sec:1} 
satisfying relations \eqref{20141020:eq2}. Equation \eqref{eq:lambdacyclic} says that $\bar \Lambda$ is an eigenvector
of $\gamma^{\ad h}$, and this in turn
 implies that  $\mf{h}'$ is stable under the action of $\gamma^{\ad h}$. Since  $\gamma^{\ad h}$ is a Coxeter-Killing
 automorphism of $\mf g$ \cite{Kos59},
  we can regard $\gamma^{\ad h}$ as a Coxeter-Killing automorphism of $\mf h'$.
From these considerations it follows -- as proved in \cite{moody87} -- that
we can choose a set of Chevalley generators of $\mf g$, say $\{f_i',h_i',e_i'\mid i\in I\}\subset\mf g$,
such that $\mf h'=\bigoplus_{i\in I}\mb Ch_i'$ and that the identity 
\begin{equation}\label{eq:coxeterincidence}
2\mbbm1_n+\gamma^{\ad h}+\gamma^{-\ad h}=(B^t)^2 \, ,
\end{equation}
holds as operators on $\mf h'$. Here $B$ denotes, as usual, the incidence matrix of $\mf g$. Following \cite{olive92},
relation \eqref{eq:coxeterincidence}
can be used to express -- in the basis $\left\{h'_i, i\in I\right\}$ -- the eigenvectors of $\gamma^{\ad h}$  in terms of the
eigenvectors of $B$. Now it is well known that the incidence matrix $B$
is a non-negative irreducible matrix  \cite{Kac90}. Therefore, it has a maximal (in the sense of the Perron-Frobenius theory)
eigenvalue, which is 
$\gamma^{\frac{1}{2}}+\gamma^{-\frac{1}{2}}$,
and a corresponding eigenvector, say $(x_1,x_2,\dots,x_n)\in\mb R_{>0}^n$.
If we fix $x_1=1$, we then have the following relation for every $i\in I$:
\begin{equation}\label{eq:eigenB}
\sum_{j \in I} B_{ij} x_j
= (\gamma^{-\frac{1}{2}}+\gamma^{\frac{1}{2}}) x_i
\,,
\qquad x_i>0
\,.
\end{equation}
Following \cite{olive92}, and essentially using \eqref{eq:coxeterincidence},
we finally obtain
\begin{equation}\label{eq:decompositionLambda}
\bar{\Lambda}=\sum_{i \in I} \gamma^{\frac{p(i)}{2}}x_i h'_i\,,
\end{equation}
where $x_i, i \in I$, are defined in \eqref{eq:eigenB} and the function $p(i)$, $i\in I$, is given in  \eqref{eq:p}.
\begin{remark}\label{rem:quandofiniremodiscrivere?}
Note that equation \eqref{eq:decompositionLambda} corresponds to a precise normalization of
$\bar\Lambda$.
In fact, it is always possible to find an automorphism of $\mf g$, under which $\mf h$ is stable,
in such a way that
$\bar{\Lambda}=c \sum_{i \in I} \gamma^{\pm \frac{p(i)}{2}}x_i h'_i$ for any $c \in \bb{C}^*$ and any choice of the
relative phase
$\gamma^{\pm \frac{p(i)}{2}} $.
The choice of the relative phase is however immaterial. Indeed, under the transformation $\gamma^{\frac{p(i)}{2}}
\to \gamma^{-\frac{p(i)}{2}} $ the right hand side of \eqref{eq:decompositionLambda} represents a different element
in the Cartan subalgebra $\mf h'$ which is however conjugated to $\bar{\Lambda}$. Hence, their spectra coincide. 
\end{remark}
The decomposition \eqref{eq:decompositionLambda} allows one to compute the spectrum of $\Lambda$ using the Perron-Frobenius eigenvector of the incidence matrix $B$. We will also need the following:
\begin{lemma}\label{lem:davveroultimo}
Let $\alpha_j$, $j \in I$, be the positive roots of $\mf g$ with respect to the Cartan
subalgebra $\mf{h}'$. Then $\bar{\Lambda}$ satisfies
\begin{equation}\label{eq:alphajlambda}
\alpha_j(\gamma^{-\frac{p(i)}{2}}\bar{\Lambda})= x_j \gamma^{\frac{p(j)-p(i)}{2}}(1-\gamma^{-2p(j)+1}) \,, \quad
\text{for all } i,j \in I
\,,
\end{equation}
where the $x_i$ are defined in \eqref{eq:eigenB}. This implies that
\begin{align*}
& \Re \alpha_j(\gamma^{-\frac{p(i)}{2}}\bar{\Lambda}) >0\,, \quad \mbox{ if }\quad p(i)=p(j)\,, \\ & \Re \alpha_j(\gamma^{-\frac{p(i)}{2}}\bar{\Lambda}) =0\,, \quad \mbox{ if } \quad p(i) \neq p(j)\,.
\end{align*}
\begin{proof}
 Since $B_{ij}\neq0$ if and only if $p(i)\neq p(j)$, equation \eqref{eq:eigenB} implies that
\begin{equation}\label{20150124:eq1}
\sum_{j\in I}\gamma^{-\frac{p(j)}{2}}B_{ij}x_j
=(\gamma^{-\frac12}+\gamma^{\frac12})\gamma^{\frac{p(i)-1}{2}}x_i
\,.
\end{equation}
Equation \eqref{20150124:eq1} and the decomposition \eqref{eq:decompositionLambda} imply the thesis.
\end{proof}
\end{lemma}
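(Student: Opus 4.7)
The strategy is to apply the functional $\alpha_j$ to the explicit decomposition of $\bar\Lambda$ provided by \eqref{eq:decompositionLambda}, and then exploit the Perron--Frobenius identity \eqref{eq:eigenB} together with the bipartite structure of the Dynkin diagram of $\mf g$ that is encoded in the parity function $p$ of \eqref{eq:p}.

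First I would recall that in the Chevalley basis $\{e'_k,f'_k,h'_k\}$ of $\mf g$ adapted to the Cartan subalgebra $\mf h'$, the simple roots satisfy $\alpha_j(h'_k)=C_{jk}$, where $C=2\mathbbm1_n-B$ is the Cartan matrix. Substituting the decomposition $\bar\Lambda=\sum_k\gamma^{p(k)/2}x_kh'_k$ into $\alpha_j(\gamma^{-p(i)/2}\bar\Lambda)$ splits the result into a ``diagonal'' piece $2\gamma^{(p(j)-p(i))/2}x_j$, coming from the identity part of $C$, and a ``$B$-piece'' $\sum_k \gamma^{(p(k)-p(i))/2}x_k B_{jk}$.

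The key combinatorial input comes next: by the inductive definition \eqref{eq:p}, the matrix $B$ encodes the adjacency of a tree and $p$ is a proper $2$-coloring, so $B_{jk}\neq 0$ forces $p(k)=1-p(j)$ (once $p$ is lifted to $\{0,1\}$). This lets me pull the common factor $\gamma^{(1-p(j)-p(i))/2}$ out of the $B$-sum, after which the eigenvector equation \eqref{eq:eigenB} reduces what remains to $(\gamma^{1/2}+\gamma^{-1/2})x_j$. Collecting terms and pulling out $\gamma^{(p(j)-p(i))/2}x_j$ gives an expression of the shape $\gamma^{(p(j)-p(i))/2}x_j\bigl(2-\gamma^{1-p(j)}-\gamma^{-p(j)}\bigr)$, which since $p(j)\in\{0,1\}$ coincides on each of the two branches with the claimed $\gamma^{(p(j)-p(i))/2}x_j(1-\gamma^{-2p(j)+1})$; this is the content of the auxiliary identity \eqref{20150124:eq1} stated in the excerpt.

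Finally, the real-part dichotomy follows by a short case analysis on the parity. When $p(i)=p(j)$ the prefactor equals $1$ and the bracket reduces to $1-\gamma$ or $1-\gamma^{-1}$, both of real part $1-\cos(2\pi/h^\vee)>0$ since $h^\vee\geq 2$. When $p(i)\neq p(j)$ the prefactor is $\gamma^{\pm1/2}$ and the product telescopes to $\pm(\gamma^{1/2}-\gamma^{-1/2})=\pm 2i\sin(\pi/h^\vee)$, which is purely imaginary. The only real difficulty is pure bookkeeping: one must fix a consistent integer lift of the $\mathbb Z/2\mathbb Z$-valued function $p$ so that powers like $\gamma^{p(k)/2}$ are unambiguous, and then verify that the two equivalent-on-$\{0,1\}$ normal forms of the bracket actually agree on both branches.
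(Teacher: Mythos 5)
Your proposal is correct and follows essentially the same route as the paper: apply $\alpha_j$ to the decomposition \eqref{eq:decompositionLambda} using $\alpha_j(h'_k)=C_{jk}=2\delta_{jk}-B_{jk}$, use the bipartite structure of the Dynkin diagram ($B_{jk}\neq0$ forces $p(k)=1-p(j)$) to factor the $B$-sum and invoke the Perron--Frobenius relation \eqref{eq:eigenB}, which is exactly the content of the paper's intermediate identity \eqref{20150124:eq1}. Your version merely spells out the case analysis on the integer lift of $p$ and the final real-part computation, which the paper leaves implicit.
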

Using the above lemma, together with equation \eqref{eq:decompositionLambda}, we compute the maximal eigenvalue
of $\Lambda$ in the representations $V^{(i)}$, $\bigwedge V^{(i)}$, $W^{(i)}$ and $M^{(i)}$.
\begin{proposition}\label{prop:pisa2}
Let $i\in I$. If $\mf g$ is a simple Lie algebra of $ADE$ type, then the following facts hold.
\begin{enumerate}[(a)]
\item
For the fundamental representation $V^{(i)}$
there exists a unique maximal eigenvalue $\lambda^{(i)}$ of $\Lambda$.
The eigenvalues $\lambda^{(i)}, i\in I$, satisfy the following identity:
\begin{equation}\label{eq:Blambda}
\sum_{j\in I}B_{ij}\lambda^{(j)}=(\gamma^{-\frac12}+\gamma^\frac12)\lambda^{(i)}.
\end{equation}
We denote by $\psi^{(i)}\in V^{(i)}$ the unique (up to a constant factor)
eigenvector of $\Lambda$ corresponding to the eigenvalue $\lambda^{(i)}$ . 
\item
For the representation $M^{(i)}$,
we have that $\sum_{j\in I}B_{ij}\lambda^{(i)}$
is a maximal eigenvalue of $\Lambda$.
Moreover,
$$
\psi^{(i)}_\otimes=\bigotimes_{j\in I}{\psi^{(j)}}^{\otimes B_{ij}}\in M^{(i)}
$$
is the corresponding eigenvector.
\item
For the representation $\bigwedge^2 V_{\frac12}^{(i)}$,
we have that $(\gamma^{-\frac12}+\gamma^\frac12)\lambda^{(i)}$
is a maximal eigenvalue of $\Lambda$.
Moreover,
$$
\gamma^{-\frac{1}{2} h } \psi^{(i)} \wedge
\gamma^{\frac{1}{2} h }
\psi^{(i)} \in\bigwedge^2V^{(i)}_{\frac{1}{2}} 
$$
is the corresponding eigenvector.
\item
The elements $\gamma^{-\frac{1}{2} h } \psi^{(i)} \wedge
\gamma^{\frac{1}{2} h }
\psi^{(i)} \in\bigwedge^2V^{(i)}_{\frac{1}{2}} $ and
$\psi^{(i)}_\otimes\in M^{(i)}$
belong to the subrepresentation $W^{(i)}$.
Hence, for the representation $W^{(i)}$
there exists a maximal eigenvalue $\mu^{(i)}$ of $\Lambda$,
given by
$$
\mu^{(i)}=\sum_{j\in I}B_{ij}\lambda^{(i)}=(\gamma^{-\frac12}+\gamma^\frac12)\lambda^{(i)}
\,,
$$
and the relation
$$
m_i\big( \gamma^{-\frac{1}{2} h } \psi^{(i)} \wedge\gamma^{\frac{1}{2} h } \psi^{(i)} \big)
=c \psi^{(i)}_\otimes
$$
holds for some $c\in\mb C^*$.
\end{enumerate}
\end{proposition}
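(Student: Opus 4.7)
Part (a) is the technical heart; parts (b)--(d) then follow relatively directly from it. For Part (a), Lemma~\ref{themostimportantlemma} reduces the problem to the spectrum of $\bar\Lambda$ on $L(\omega_i)_0$, multiplied globally by $\gamma^{p(i)/2}$. Since $\bar\Lambda$ is regular semisimple with centralizer $\mf h'$, its eigenvectors on $L(\omega_i)$ are precisely the $\mf h'$-weight vectors, and its eigenvalues are the scalars $\mu(\bar\Lambda)$ as $\mu$ runs over the weights of $L(\omega_i)$ in $\mf h'^\ast$. I would identify $\lambda^{(i)}$ as the unique eigenvalue corresponding to the highest such weight; using the decomposition \eqref{eq:decompositionLambda} it should simplify to a positive real multiple of the Perron--Frobenius entry $x_i$, from which identity \eqref{eq:Blambda} follows immediately via \eqref{eq:eigenB}. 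The crux is establishing strict maximality: for any other weight $\mu$, one expands $\lambda^{(i)}-\gamma^{p(i)/2}\mu(\bar\Lambda)=\sum_j n_j\,\gamma^{p(i)/2}\alpha_j(\bar\Lambda)$ and applies Lemma~\ref{lem:davveroultimo} to conclude that each real part is nonnegative, and strictly positive when $p(j)=p(i)$. The main obstacle will be ensuring that at least one strictly positive contribution is guaranteed to occur, which requires the bipartite $ADE$ structure together with some weight combinatorics for fundamental representations.

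Parts (b) and (c) then come fairly directly. For (b), the Leibniz rule for $\Lambda$ on tensor products immediately yields $\psi^{(i)}_\otimes$ as an eigenvector with eigenvalue $\sum_j B_{ij}\lambda^{(j)}$, and maximality follows from (a) since any other eigenvalue is a sum of which at least one summand has strictly smaller real part. For (c), I would verify directly that $\gamma^{\pm h/2}\psi^{(i)}$ are both eigenvectors of $\Lambda$ on $V^{(i)}_{1/2}$ with eigenvalues $\gamma^{\pm 1/2}\lambda^{(i)}$: the first follows from Lemma~\ref{themostimportantlemma} applied with the shift $k=(p(i)+1)/2$, while the second uses the cyclic intertwining $\gamma^h\bar\Lambda\gamma^{-h}=\gamma\bar\Lambda$ of \eqref{eq:lambdacyclic} to show that $\gamma^h\bar\psi^{(i)}$ is itself a $\bar\Lambda$-eigenvector but with eigenvalue $\gamma^{-1}\bar\lambda^{(i)}$. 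Wedging then produces an eigenvector with the claimed real eigenvalue $(\gamma^{-1/2}+\gamma^{1/2})\lambda^{(i)}$; maximality follows since any other eigenvalue on $\bigwedge^2 V^{(i)}_{1/2}$ is a sum of two eigenvalues of $\Lambda|_{V^{(i)}_{1/2}}$ at least one of which has strictly smaller real part.

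For Part (d), the strategy is to show that both candidate eigenvectors lie in $W^{(i)}$, and then to apply Schur's lemma. An analogous argument to Part (a), applied now to $L(\eta_i)_{(p(i)+1)/2}$, shows that $W^{(i)}$ carries a maximal eigenvalue of $\Lambda$ equal to $\sum_j B_{ij}\lambda^{(j)}$, which by \eqref{eq:Blambda} coincides with $(\gamma^{-1/2}+\gamma^{1/2})\lambda^{(i)}$ (the matching of the evaluation phases $\gamma^{(p(i)+1)/2}$ and $\gamma^{p(j)/2}$ on indices $j$ with $B_{ij}\neq 0$ is where the bipartite convention $p\colon I\to\mb Z/2\mb Z$ is used). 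By the uniqueness of maximal eigenvectors established in (b) and (c), combined with the decompositions \eqref{eq:pisa_direct_sum}, both maximal eigenvectors are forced to lie in the $W^{(i)}$ summand on each side. Since $m_i$ is, by its construction via \eqref{m_i} and \eqref{m_i-affine}, a nonzero morphism of $\widehat{\mf g}$-modules whose restriction $W^{(i)}\to W^{(i)}$ is a nonzero scalar, Schur's lemma then forces $m_i(\gamma^{-h/2}\psi^{(i)}\wedge\gamma^{h/2}\psi^{(i)})=c\,\psi^{(i)}_\otimes$ for some $c\in\mb C^\ast$, concluding the proof.
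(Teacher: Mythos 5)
Your plan follows essentially the same route as the paper: reduce to the spectrum of $\bar\Lambda$ via the twist in Lemma~\ref{themostimportantlemma}, control real parts through Lemma~\ref{lem:davveroultimo} and the decomposition \eqref{eq:decompositionLambda}, get (b) from (a) by additivity on tensor products, and locate the distinguished vectors inside $W^{(i)}$ for (d). However, two steps are left unproved. The ``main obstacle'' you flag in (a) is resolved by a standard fact worth stating explicitly: since $\omega_i(h_j)=\delta_{ij}$, every weight $\mu\neq\omega_i$ of $L(\omega_i)$ satisfies $\mu\leq\omega_i-\alpha_i$, so the coefficient $n_i$ of $\alpha_i$ in $\omega_i-\mu$ is at least $1$; and Lemma~\ref{lem:davveroultimo} gives $\Re\,\alpha_i(\gamma^{-p(i)/2}\bar\Lambda)=x_i\bigl(1-\cos(2\pi/h^\vee)\bigr)>0$ because here $p(j)=p(i)$, while all other contributions have nonnegative real part. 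This yields both strict maximality and multiplicity one. (Also beware the phase: in the paper's normalization the spectrum of $\Lambda$ on $V^{(i)}$ is that of $\gamma^{-p(i)/2}\bar\Lambda$, not $\gamma^{+p(i)/2}\bar\Lambda$; with the opposite sign the vanishing statement of Lemma~\ref{lem:davveroultimo} for $p(i)\neq p(j)$ fails.)

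The more serious gap is in (c). Your claim that ``any other eigenvalue on $\bigwedge^2V^{(i)}_{\frac12}$ is a sum of two eigenvalues at least one of which has strictly smaller real part'' does not follow from part (a). The spectrum on $V^{(i)}_{\frac12}$ is the spectrum on $V^{(i)}$ rotated by the phase $\gamma^{\frac12}$, and a rotation does not preserve the ordering of real parts: an eigenvalue $\mu$ with $\Re\mu<\lambda^{(i)}$ may a priori satisfy $\Re(\gamma^{\frac12}\mu)>\lambda^{(i)}\cos(\pi/h^\vee)$. One must rerun the weight analysis for the twisted evaluation $L(\omega_i)_{p^*(i)/2}$ with $p^*(i)=1-p(i)$, where the roles are exchanged: now $\Re\,\alpha_i(\gamma^{-p^*(i)/2}\bar\Lambda)=0$ while $\Re\,\alpha_j(\gamma^{-p^*(i)/2}\bar\Lambda)>0$ for every $j$ with $B_{ij}\neq0$. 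Hence the weights $\omega_i$ and $\omega_i-\alpha_i$ give the only two eigenvalues $\gamma^{\pm\frac12}x_i$ of largest real part, and every other weight, being bounded above by $\omega_i-\alpha_i-\alpha_j$ for some $j$ adjacent to $i$, gives an eigenvalue of strictly smaller real part; this is exactly how the paper proves (c). Your part (d) is a legitimate mild variant: the paper instead identifies $\psi^{(i)}=\gamma^{\frac{p(i)}2h}v_i'$ explicitly and recognizes the two candidate vectors as group translates of the highest weight vectors of $L(\eta_i)$, whereas you argue that the maximal eigenvalue is attained on $W^{(i)}$ and that the ambient maximal eigenspace is one-dimensional; both arguments work, and the final identification via $m_i$ is the same.
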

\begin{proof}
Recall that the weights appearing in the representation $L(\omega_i)$ are of the following type:
$$\omega_i, \qquad \omega_i-\alpha_i, \qquad \omega_i-\alpha_i-\alpha_j -\omega\,,$$ where in the latter case the index $j$ is such that
$B_{ij}\neq0$ and $\omega$ is an integral non-negative linear combination of the positive roots.
Clearly for any $\tilde{h} \in \mf{h}'$ such that $\Re \alpha_j(\tilde{h})\geq0$, $\forall j\in I$ we have
$$ 
\Re \omega_i(\tilde{h})\geq  (\omega_i-\alpha_i)(\tilde{h}) \geq (\omega_i-\alpha_i-\alpha_j -\omega)(\tilde{h}) \, .
$$
Therefore for any such $\tilde{h} $ the eigenvalues with maximal real part is $\omega_i(\tilde{h})$, followed by
$(\omega_i-\alpha_i)(\tilde{h})$, $(\omega_i-\alpha_i-\alpha_j)(\tilde{h})$, and so on.

We use this simple argument to study the maximal eigenvalues of $\Lambda$ in the representation $V^{(i)}$.
By definition of $\Lambda$ and $\bar \Lambda$, and using equation \eqref{20150108:eq3},
the action of $\Lambda$ on $V^{(i)}$ coincides with the action of $\gamma^{\frac{p(i)}{2}(-1+\ad h)} \bar{\Lambda}$
on $L(\omega_i)$.
By Lemma \ref{lem:davveroultimo}, $\alpha_j(\gamma^{-\frac{p(i)}{2}} \bar{\Lambda})\geq 0$,
for every $j \in I$,
and $\alpha_i(\gamma^{-\frac{p(i)}{2}} \bar{\Lambda})=x_i(1-\gamma^{1-2p(i)})$.
It follows that in the representation $V^{(i)}$, $\Lambda$ has the maximal eigenvalue
$$\omega_i(\gamma^{-\frac{p(i)}{2}} \bar{\Lambda})=x_i \, .$$
Therefore, by definition of the $x_i$'s, the $\lambda^{(i)}$ satisfy equation \eqref{eq:eigenB}.

Moreover we have 
\begin{equation}\label{eq:psiivi}
\psi^{(i)}=\gamma^{\frac{p(i)}2 h} v'_i
\end{equation}
and 
\begin{equation}\label{eq:fpsiivi}
f'_i\psi^{(i)}= c \gamma^{-h}\psi^{(i)}\,,
\end{equation}
for some $c \in \bb C^*$. This completes the proof of part (a).
Part (b) follows directly from part (a).
To prove part (c), we follow the same lines we used to prove part (a). In fact,
the action of $\Lambda$ on $V^{(i)}_{\frac{1}{2}}$ coincides with the action
of $\gamma^{\frac{p^*(i)}{2}(-1+\ad h)} \bar{\Lambda}$
on $L(\omega_i)$, where $p^*(i)=1-p(i)$.
By equation \eqref{eq:alphajlambda} we have $\Re \alpha_i(\gamma^{\frac{p^*(i)}{2}(-1+\ad h)} \bar{\Lambda})=0$, while
$\Re \alpha_j(\gamma^{\frac{p^*(i)}{2}(-1+\ad h)} \bar{\Lambda})>0 $ if $p(i) \neq p(j)$. It follows that
the two eigenvalues with maximal real part correspond to
to the weights $\omega_i$ and $\omega_i-\alpha_i$.  By equation \eqref{eq:alphajlambda}, these eigenvalues are $x_i \gamma^{\frac{1}{2}}$ and 
$x_i \gamma^{-\frac{1}{2}}$, and the corresponding eigenvectors are $ \gamma^{\mp \frac12h}\psi^{(i)}$.
Therefore $\bigwedge V^{(i)}$ has a unique maximal eigenvalue $ (\gamma^{\frac12}+\gamma^{-\frac12})x_i $, with
eigenvector
\begin{equation}\label{eq:psiunmezzo}
\gamma^{-\frac12h}\psi^{(i)}\wedge\gamma^{\frac12h}\psi^{(i)}=c \gamma^{\frac{(1+p(i))}2h}(v'_i \wedge f'_iv'_i)\,,
\end{equation}
for some $c \neq 0$. Here we have used equations \eqref{eq:psiivi} and  \eqref{eq:fpsiivi}.

Let us prove (d). By equation \eqref{eq:psiivi} and Lemma \ref{lem:231202},
it follows that $\psi^{(i)}_\otimes \in W^{(i)}$, where $W^{(i)}$ is seen as a subrepresentation
of $M^{(i)}$. Therefore the maximal eigenvalue of $W^{(i)}$ and $M^{(i)}$ coincide.
Finally by equations \eqref{eq:psiivi}, \eqref{eq:psiunmezzo} and Lemma \ref{lem:021202}, 
we have that $\gamma^{-\frac12h}\psi^{(i)}\wedge\gamma^{\frac12h}\psi^{(i)} \in W^{(i)}$
and $m_i(\gamma^{-\frac12h}\psi^{(i)}\wedge\gamma^{\frac12h}\psi^{(i)})
=c\psi^{(i)}_\otimes$, $c\in\mb C^*$, thus proving part (d) and concluding the proof.
\end{proof}
\begin{remark}\label{rem:pisamerda1}
We note that we can always choose a normalization of the eigenvectors $\psi^{(i)}\in V^{(i)}$,
$i\in I$, such that
\begin{equation}\label{eq:021202}
m_i\big( \gamma^{-\frac{1}{2} h } \psi^{(i)} \wedge\gamma^{\frac{1}{2} h } \psi^{(i)} \big)
= \psi^{(i)}_\otimes
\,,
\end{equation}
for every $i\in I$. From now on, we will assume that Proposition \ref{prop:pisa2}(d)
holds with the normalization constant $c=1$.
\end{remark}
By Proposition \ref{prop:pisa2}(a), for any fundamental representation $V^{(i)}$, $i\in I$ of
$\widehat{\mf g}$, there exists a maximal eigenvalue $\lambda^{(i)}$ with eigenvector $\psi^{(i)}$.
Therefore, by Theorem \ref{thm:asymptotic}
there exists a unique subdominant solution $\Psi^{(i)}(x,E):\widetilde{\mb C}\to V^{(i)}$
of the differential equation
\eqref{20141125:eq1} in the representation $V^{(i)}$
with asymptotic behavior
\begin{equation}\label{eq:21ott04}
 \Psi^{(i)}(x,E)
 =e^{-\lambda^{(i)} S(x,E)} q(x,E)^{-h}\big( \psi^{(i)} + o(1) \big)
 \,,
\text{ in the sector } |\arg{x}| <\frac{\pi}{2(M+1)}
\,.
\end{equation}
Using Proposition \ref{prop:pisa2} we can prove the following
theorem establishing the $\Psi$-system associated to the Lie algebra $\mf g$.
\begin{theorem}\label{thm:psi-sistem}
Let $\mf g$ be a simple Lie algebra of $ADE$ type, and let the solutions $\Psi^{(i)}(x,E):\widetilde{\mb C}\to V^{(i)}$,
$i\in I$, have the asymptotic behavior \eqref{eq:21ott04}.
Then, the following identity holds:
\begin{equation}\label{eq:031201}
m_i\big(  \Psi_{-\frac{1}{2}}^{(i)}(x,E) \wedge
\Psi_{\frac{1}{2}}^{(i)}(x,E) \big) =\otimes_{j\in I} \Psi^{(j)}(x,E)^{\otimes B_{ij}}\,,
\quad\text{for every }i\in I
\,.
\end{equation}
\end{theorem}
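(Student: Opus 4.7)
The plan is to show that both sides of \eqref{eq:031201} are subdominant solutions of the ODE \eqref{20141125:eq1} in the evaluation representation $M^{(i)}$, and then to invoke the uniqueness statement of Theorem \ref{thm:asymptotic}.

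First I would verify that both sides are genuine solutions of $\mathcal{L}(x,E)\Phi=0$ valued in $M^{(i)}$. Since $e^{2\pi i(-1/2)}=e^{2\pi i(1/2)}=-1$, the evaluation representations $V^{(i)}_{-1/2}$ and $V^{(i)}_{1/2}$ coincide, so by Corollary \ref{cor:asymptotic} both $\Psi^{(i)}_{-1/2}$ and $\Psi^{(i)}_{1/2}$ are solutions in $V^{(i)}_{1/2}$. Since the connection acts on tensor products by the coproduct, the wedge $\Psi^{(i)}_{-1/2}\wedge\Psi^{(i)}_{1/2}$ is a solution in $\bigwedge^2 V^{(i)}_{1/2}$, and applying the morphism $m_i$ of $\widehat{\mathfrak{g}}$-modules gives a solution in $M^{(i)}$. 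The right hand side is manifestly a solution in $M^{(i)}$.

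Second, I would compute the leading asymptotic behavior on the positive real axis. Using \eqref{eq:asymppsik} for each factor on the right hand side and the multiplicativity of the asymptotics under tensor product,
\[
\bigotimes_{j\in I} \Psi^{(j)}(x,E)^{\otimes B_{ij}} = e^{-(\sum_j B_{ij}\lambda^{(j)})S(x,E)}\,q(x,E)^{-h}\bigl(\psi^{(i)}_\otimes + o(1)\bigr).
\]
For the left hand side, Corollary \ref{cor:asymptotic} gives
\[
\Psi^{(i)}_{\pm 1/2}(x,E)=e^{-\gamma^{\pm 1/2}\lambda^{(i)}S(x,E)}\,q(x,E)^{-h}\gamma^{\mp h/2}\bigl(\psi^{(i)}+o(1)\bigr),
\]
and, since $q^{-h}$ acts on the wedge as a group element,
\[
\Psi^{(i)}_{-1/2}\wedge\Psi^{(i)}_{1/2}=e^{-(\gamma^{1/2}+\gamma^{-1/2})\lambda^{(i)} S}\,q^{-h}\bigl(\gamma^{h/2}\psi^{(i)}\wedge\gamma^{-h/2}\psi^{(i)}+o(1)\bigr).
\]
Applying $m_i$ and invoking Proposition \ref{prop:pisa2}(d) together with the normalization of Remark \ref{rem:pisamerda1}, and using \eqref{eq:Blambda} to identify $(\gamma^{1/2}+\gamma^{-1/2})\lambda^{(i)}=\sum_j B_{ij}\lambda^{(j)}=\mu^{(i)}$, I would conclude that the two sides have exactly the same leading asymptotics $e^{-\mu^{(i)}S}q^{-h}(\psi^{(i)}_\otimes+o(1))$.

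Third, by Proposition \ref{prop:pisa2}(b) the eigenvalue $\mu^{(i)}$ is maximal for $\Lambda$ on $M^{(i)}$ with eigenvector $\psi^{(i)}_\otimes$, so Theorem \ref{thm:asymptotic} applies in $M^{(i)}$ and guarantees a unique subdominant solution with this asymptotic behavior. Hence the two sides coincide. The main obstacle I anticipate is purely bookkeeping: keeping track of signs and normalizations so that the constant in front of \eqref{eq:031201} is exactly $1$. This is handled by the normalization convention of Remark \ref{rem:pisamerda1}, which was chosen precisely to absorb these factors; every other step is forced by the uniqueness of the subdominant solution and by Proposition \ref{prop:pisa2}, which does the real algebraic work of relating the maximal eigenvalue on $M^{(i)}$ and on $\bigwedge^2 V^{(i)}_{1/2}$ to the Perron--Frobenius structure of the incidence matrix $B$.
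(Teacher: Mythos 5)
Your proposal is correct and follows essentially the same route as the paper: both sides are identified as the unique subdominant solution in $M^{(i)}$ via Proposition \ref{prop:pisa2}(b),(c),(d), Corollary \ref{cor:asymptotic}, the normalization \eqref{eq:021202}, and the uniqueness statement of Theorem \ref{thm:asymptotic}. The only point the paper handles implicitly and you flag explicitly is the sign/ordering bookkeeping in the wedge, which is indeed absorbed by the normalization of Remark \ref{rem:pisamerda1}.
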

\begin{proof}
Due to Proposition \ref{prop:pisa2}(b) and Theorem \ref{thm:asymptotic},
the unique subdominant solution
to equation \eqref{20141125:eq1} in $M^{(i)}$
is
$$
\otimes_{j\in I} \Psi^{(j)}(x,E)^{\otimes B_{ij}}
=e^{- \mu^{(i)}S(x,E)} q(x,E)^{-h}\left( \psi_\otimes^{(i)}+o(1)\right)\,,
\quad\text{for }x\gg0
\,.
$$
Moreover, by equation \eqref{eq:Psik},
Corollary \ref{cor:asymptotic} and Proposition \ref{prop:pisa2}(c) we have that
$$
\Psi_{-\frac{1}{2}}^{(i)}(x,E) \wedge\Psi_{\frac{1}{2}}^{(i)}(x,E)
=e^{- \mu^{(i)}S(x,E)} q(x,E)^{-h}  \left( \psi_{-\frac12}^{(i)}\wedge\psi_{\frac12}^{(i)} +o(1)\right)\,,
\quad\text{for }x\gg0
\,.
$$
The proof follows by equation \eqref{eq:021202} and the uniqueness of the subdominant solution.
\end{proof}
\begin{example}\label{exa:psi_system_A}
In type $A_n$, equation \eqref{eq:031201} becomes
$$
m_i\big(  \Psi_{-\frac{1}{2}}^{(i)} \wedge
\Psi_{\frac{1}{2}}^{(i)} \big) =\Psi^{(i-1)}\otimes\Psi^{(i+1)}\,,
\quad\text{for }i=1,\dots,n
\,,
$$
where we set $\Psi^{(0)}=\Psi^{(n+1)}=1$.
\end{example}
\begin{example}\label{exa:psi_system_D}
In type $D_n$, equation \eqref{eq:031201} becomes
\begin{align}\label{eq:psidn1}
&m_i\big(  \Psi_{-\frac{1}{2}}^{(i)} \wedge
\Psi_{\frac{1}{2}}^{(i)} \big) =\Psi^{(i-1)}\otimes\Psi^{(i+1)}\,,
\quad\text{for }i=1,\dots,n-3
\,,
\\ \label{eq:psidn2}
&m_{n-2}\big(  \Psi_{-\frac{1}{2}}^{(n-2)} \wedge\Psi_{\frac{1}{2}}^{(n-2)} \big)
=\Psi^{(n-3)}\otimes\Psi^{(n-1)}\otimes\Psi^{(n)}\,,
\\ \label{eq:psidn3}
&m_{n-1}\big(  \Psi_{-\frac{1}{2}}^{(n-1)} \wedge\Psi_{\frac{1}{2}}^{(n-1)} \big)
=\Psi^{(n-2)}
=m_{n}\big(  \Psi_{-\frac{1}{2}}^{(n)} \wedge\Psi_{\frac{1}{2}}^{(n)} \big)
\,,
\end{align}
where we set $\Psi^{(0)}=1$.
\end{example}
\begin{remark}\label{re:jinjun}
The $\Psi$-system in Example \ref{exa:psi_system_D} is a complete set of relations among the
subdominant solutions. The analogue system of equations 
obtained in \cite[Eqs. (3.19,3.20,3.21)]{Sun12} lacks of the equation \eqref{eq:psidn2} for the
node $n-2$ of the Dynkin diagram.  Note that equations  \cite[Eqs. (3.19,3.20)]{Sun12} are  equivalent
to \eqref{eq:psidn1} and \eqref{eq:psidn3} of the present paper.
Indeed, it follows from the direct sum decomposition \ref{eq:pisa_direct_sum} that 
$m_i\circ\iota=\mbbm1_{V^{(n-2)}}$,
for $i=n-1,n$, where $\iota$ is the embedding
$V^{(n-2)}\stackrel{\iota}{\hookrightarrow}\bigwedge^2 V^{(i)}_{\frac12}$
(in fact, $V^{(n-2)}\cong W^{(n-1)}\cong W^{(n)}$ in this particular case).
The further identity \cite[Eq. (3.21)]{Sun12}, which 
can be obtained from \eqref{eq:ultimogiorno}, requires the introduction of another unknown and therefore equations
\cite[Eqs. (3.19,3.20,3.21)]{Sun12} are an incomplete system of $n$ equations in $n+1$ unknowns.
\end{remark}
\begin{example}\label{exa:psi_system_E}
In type $E_n$, $n=6,7,8$, equation \eqref{eq:031201} becomes
\begin{align*}
&m_i\big(  \Psi_{-\frac{1}{2}}^{(i)} \wedge
\Psi_{\frac{1}{2}}^{(i)} \big) =\Psi^{(i-1)}\otimes\Psi^{(i+1)}\,,
\quad\text{for }i=1,\dots, n-4
\,,
\\
&m_{n-3}\big(  \Psi_{-\frac{1}{2}}^{(n-3)} \wedge\Psi_{\frac{1}{2}}^{(n-3)} \big)
=\Psi^{(n-4)}\otimes\Psi^{(n-2)}\otimes\Psi^{(n-1)}\,,
\\
&m_{n-2}\big(  \Psi_{-\frac{1}{2}}^{(n-2)} \wedge
\Psi_{\frac{1}{2}}^{(n-2)} \big) =\Psi^{(n-3)}
\,,
\\
&m_{n-1}\big(  \Psi_{-\frac{1}{2}}^{(n-1)} \wedge
\Psi_{\frac{1}{2}}^{(n-1)} \big) =\Psi^{(n-3)}\otimes\Psi^{(n)}
\,,
\\
&m_n\big(  \Psi_{-\frac{1}{2}}^{(n)} \wedge
\Psi_{\frac{1}{2}}^{(n)} \big) =\Psi^{(n-1)}
\,,
\end{align*}
where we set $\Psi^{(0)}=1$.
\end{example}

\section{The \texorpdfstring{$Q$}{Q}-system}\label{sec:Q}
In this section, for any simple Lie algebra of $ADE$ type,
we define the generalized spectral determinants $Q^{(i)}(E;\ell)$, $i\in I$,
and we prove that they satisfy the Bethe Ansatz equations, also known as
$Q$-system. The spectral determinants are entire functions of the parameter $E$, and
they are defined by the behavior of the functions $\Psi^{(i)}(x,E)$ close to the Fuchsian singularity.
More precisely, $Q^{(i)}(E;\ell)$ is the coefficient of the most singular term of the expansion of $\Psi^{(i)}(x,E)$
in the neighborhood of $x=0$.

In the case of the Lie algebra $\mf{sl}_2$,  the spectral determinant $Q(E;\ell)$ was originally introduced
in \cite{dorey98,bazhanov01}, while the generalization to $\mf{sl}_n$ has been given in \cite{dorey00,junji00}.
An incomplete construction for Lie algebras of $BCD$ type can be found in \cite{dorey07,Sun12}. The terminology generalized
spectral determinants is motivated by the $\mf{sl}_2$ case.
Indeed, for the Lie algebra $\mf{sl}_2$, the linear ODE \eqref{20141125:eq1}
is equivalent to a Schr\"odinger equation with a polynomial potential and a
centrifugal term, and the spectral determinant $Q(E;\ell)$ vanishes
at the eigenvalues of the Schr\"odinger operator.

Before proving the main result of this section, we briefly review
some well-known fundamental results from the theory of Fuchsian singularities. Here we follow \cite[Chap. III]{Ilya08}.

\begin{remark}\label{rem:integerpotential}
 In this Section we assume that the potential $p(x,E)=x^{M h^\vee}$ is analytic in $x=0$,
 namely $Mh^\vee \in \bb{Z}_+$. With this assumption,
 the point $x=0$ is simply a Fuchsian singularity of the equation. The case of a potential with a branch point
 at $x=0$ can formally be treated without any modification -- see \cite{bazhanov01} for the case $A_1$ -- but the mathematical theory
 is less developed.
\end{remark}

\subsection{Monodromy about the Fuchsian singularity}\label{sec:monodromy}
For any linear ODE with Fuchsian singularity at $x=0$, namely of the type
\begin{equation}\label{eq:fuchsian}
\Psi'(x)=\left(\frac Ax +\ B(x)\right) \Psi(x)
\,,
\end{equation}
where $A$ is a constant matrix and $B(x)$ is a regular function,
the monodromy operator is the endomorphism on the space of solutions of
\eqref{eq:fuchsian}
which associates to any solution its analytic continuation
around a small Jordan curve encircling $x=0$. More concretely, if we fix a matrix solution $Y(x)$ of the linear ODE \eqref{eq:fuchsian}
with non-vanishing determinant (i.e. a basis of solutions), then
the monodromy matrix $M$ is defined by the relation
\begin{equation*}
 Y(e^{2 \pi i}x)=Y(x)M \, .
\end{equation*}
Here, $Y(e^{2 \pi i}x)$ is the customary notation for the analytic continuation of $Y(x)$ . We are interested in the
invariant subspaces of the monodromy matrix $M$.
By the general theory, the algebraic eigenvalues of the monodromy matrix $M$
are of the form $e^{2\pi ia}$, for any eigenvalue $a$ of $A$.

We said that an eigenvalue $a$ of $A$ is non-resonant if there exists no other eigenvalue of $a'$ such that
$a-a' \in \bb{Z}$.
In this case the eigenvalue $e^{2\pi i a}$ has multiplicity one and there exists a solution of
equation \eqref{eq:fuchsian} which is an eigenvector of the monodromy matrix $M$
with eigenvalue $e^{2\pi i a}$. Such a solution has the form
\begin{equation}\label{eq:simplefuchsian}
 \chi_{a}(x)=x^{a}\big( \chi_{a} +O(x) \big)
\end{equation}
for any $\chi_a$ eigenvector of $A$ with eigenvalue $a$.
If the eigenvalues $a_1,\dots, a_k$ are resonant,
meaning that $a_i -a_j \in \bb{Z}$ for $i,j=1\dots,k$, then
the eigenvalue $e^{2 \pi i a_1}=\dots=e^{2\pi i a_k}$ has multiplicity $k$. 
The matrix $M$, in general, is not diagonalizable when restricted to the $k$-dimensional invariant
subspace corresponding to this eigenvalue.
The corresponding solutions
do not have the form \eqref{eq:simplefuchsian}
due to the appearance of logarithmic terms.

We are interested in the linear ODE \eqref{20141125:eq1}, which has a Fuchsian singularity with $A=-\ell $, $\ell\in\mf h$.
For any representation $V^{(i)}$, $i\in I$,
the eigenvalues for the action of $\ell$ on $V^{(i)}$ have the form $\lambda(\ell)$,
where $\lambda\in P$ is a weight of the representation $V^{(i)}$.  Note that for a generic choice of
$\ell\in\mf h$ and  for distinct weights $\lambda_1$ and $\lambda_2$,
the corresponding eigenvalues $\lambda_1(\ell)$, $\lambda_2(\ell)$ are non-resonant.
Indeed, generically, two eigenvalues are resonant if and only if they coincide,
and therefore they correspond to a weight of multiplicity bigger than one.

\subsection{The dominant term at \texorpdfstring{$x=0$}{x=0}}\label{sec:dominant}
Let $\omega\in P^+$,
and
let us denote by
$$
l^*=\max_{\lambda\in P_\omega} \Re\lambda(\ell)\,.
$$
Let also $\lambda^*\in P$ be such that $l^*=\Re\lambda^*(\ell)$.
If $\lambda^*$ is a weight of multiplicity one in $L(\omega)$, then
-- by the discussion in Section \ref{sec:monodromy} --
the most singular behavior at $x=0$
of a solution to equation \eqref{20141125:eq1} is $x^{-l^*}$. In this section, for any irreducible
representation $L(\omega)$ of $\mf{g}$ and a generic $\ell \in \mf{h}$,  we characterize the weight $\lambda^*\in P$
maximizing $\lambda(\ell)\in\mb Z$.
Moreover, we show that it has multiplicity one.
These facts will be used to define the generalized spectral determinant $Q^{(i)}(E;\ell)$.

It is well-known from the general theory of simple Lie algebras (see \cite[Appendix D]{FH91}),
that
given a generic element $\ell \in \mf h$,
we can decompose the set of the roots $R$ of $\mf g$
in two distinct and complementary
sets (of positive and negative roots)
$$
R^+_\ell=\lbrace \alpha\in R\mid \Re \alpha (\ell)>0 \rbrace\,,
\qquad
R^-_\ell= \lbrace \alpha\in R\mid\Re\alpha(l)<0 \rbrace
\,.
$$
Consequently, we can associate to this $\ell\in\mf h$ a Weyl Chamber $\mf{W}_\ell$, as well as  an element
$w_\ell$ of the Weyl group (it is the element which maps
the original Weyl chamber into $\mf{W}_\ell$) and a (possibly new) set of simple roots
$\Delta_\ell=\lbrace w_\ell(\alpha_i)\mid i\in I \rbrace$. The weight $w_\ell(\omega)$ is the highest weight
of $L(\omega)$ with respect to
the new set of simple roots $\Delta_\ell$, and due to the definition of $R^-_\ell$, the action of any negative
simple root $w_\ell(-\alpha_i)$, $i\in I$,
decreases the value of $\Re w_\ell(\omega)(\ell)$. 
Therefore, $\lambda^*=w_\ell(\omega)$ is the unique weight maximizing the
function $\Re \lambda(\ell)$, $\lambda\in P_\omega$.
This weight has multiplicity one since it lies
in the Weyl orbit of $\omega$.
\\

In the case of a fundamental representation $L(\omega_i)$, $i\in I$,
the weight $w_\ell(\omega_i-\alpha_i)$ belongs to $P_{\omega_i}$,
and it has multiplicity one.
Moreover, it is possible to show that all the weights in $P_{\omega_i}$ different
from $w_\ell(\omega_i)$ are obtained from $w_\ell(\omega_i-\alpha_i)$ by a repeated action of the
negative simple roots $w_\ell(-\alpha_i)$.
We conclude that $w_\ell(\omega_i-\alpha_i)$ is the unique weight maximizing
$\Re \lambda(\ell)$ on $P_{\omega_i}\setminus \lbrace w_\ell(\omega_i) \rbrace$. We have thus proved the following result:
\begin{proposition}\label{prop:maximalweight}
Let $\ell \in \mf h$ be a generic element, and let $w_\ell$ the associated element of the Weyl group.
Then the weight $w_\ell(\omega)\in P_\omega$, $\omega\in P^+$, has multiplicity one and
$$
\Re w_\ell(\lambda)(\ell) > \Re \lambda(\ell)
\,, 
$$
for any weight $\lambda \in P_{\omega}$, $\lambda\neq w_\ell(\omega)$.
In the case of a fundamental weight $\omega=\omega_i$, $i\in I$,
the weight $ w_\ell(\omega_i-\alpha_i)\in P_{\omega_i}$
has multiplicity one and
$$
\Re w_\ell(\omega_i)(\ell) > \Re w_\ell(\omega_i-\alpha_i)(\ell) > \Re \lambda(\ell) 
\,,
$$
for any $\lambda \in P_{\omega_i}$, $\lambda\neq w_\ell(\omega_i),w_\ell(\omega_i-\alpha_i)$.
\end{proposition}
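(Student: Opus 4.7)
The plan is to exploit the standard bijection between generic elements of the Cartan subalgebra and Weyl chambers. For a generic $\ell\in\mf h$, no $\alpha(\ell)$ with $\alpha\in R$ lies on the imaginary axis, so $R=R^+_\ell\sqcup R^-_\ell$ with $R^-_\ell=-R^+_\ell$ is a positive system. I would then invoke the classical fact that the Weyl group acts simply transitively on positive systems, giving a unique $w_\ell\in W$ with $w_\ell(R^+)=R^+_\ell$, so that $\Delta_\ell=w_\ell(\Delta)$ is the associated set of simple roots.

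For the first assertion, I would observe that with respect to $\Delta_\ell$ the highest weight of $L(\omega)$ is $w_\ell(\omega)$, since Weyl-conjugate weights have the same multiplicity and the highest weight transforms covariantly with the choice of Weyl chamber. Hence every $\lambda\in P_\omega$ admits an expression $\lambda=w_\ell(\omega)-\sum_{i\in I}n_i w_\ell(\alpha_i)$ with $n_i\in\mb Z_{\ge 0}$, and taking real parts under $\ell$ yields $\Re w_\ell(\omega)(\ell)-\Re\lambda(\ell)=\sum_i n_i \Re w_\ell(\alpha_i)(\ell)>0$ unless all $n_i=0$, because $\Re w_\ell(\alpha_i)(\ell)>0$ by construction. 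The multiplicity-one property of $w_\ell(\omega)$ follows from Weyl-invariance of weight-space dimensions together with $\dim L(\omega)_\omega=1$.

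For the stronger statement concerning fundamental representations, the central step is to show that in $L(\omega_i)$ the unique weight differing from $\omega_i$ by a single simple root is $\omega_i-\alpha_i$, and that it has multiplicity one. I would deduce this from the Verma-module picture: $L(\omega_i)_{\omega_i-\alpha_j}$ is at most one-dimensional, spanned by the image of $f_jv_i$, which survives the quotient $M(\omega_i)\twoheadrightarrow L(\omega_i)$ iff $\omega_i(h_j)=\delta_{ij}\ne 0$, i.e.\ iff $j=i$. Applying $w_\ell$ transports this to the assertion that $w_\ell(\omega_i-\alpha_i)$ is the unique ``level-one'' weight below $w_\ell(\omega_i)$ in $P_{\omega_i}$. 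Consequently, any other $\lambda\in P_{\omega_i}\setminus\{w_\ell(\omega_i),w_\ell(\omega_i-\alpha_i)\}$ satisfies $w_\ell(\omega_i)-\lambda=\sum_j n_j w_\ell(\alpha_j)$ with $\sum_j n_j\ge 2$, and a direct computation gives
\[
\Re w_\ell(\omega_i-\alpha_i)(\ell)-\Re\lambda(\ell)=(n_i-1)\Re w_\ell(\alpha_i)(\ell)+\sum_{j\ne i} n_j \Re w_\ell(\alpha_j)(\ell),
\]
which is strictly positive; the only borderline case $n_i=1$, $n_{j\ne i}=0$ is excluded precisely by the hypothesis $\lambda\ne w_\ell(\omega_i-\alpha_i)$.

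The main obstacle I anticipate is the level-one uniqueness step for fundamental representations, namely that $\omega_i-\alpha_i$ is the only weight of the form $\omega_i-\alpha_j$ in $L(\omega_i)$. This relies crucially on the defining property $\omega_i(h_j)=\delta_{ij}$ of fundamental weights and fails in general for $\omega\in P^+$; the remaining ingredients---existence and uniqueness of $w_\ell$, covariance of highest weights under the Weyl group, Weyl-invariance of multiplicities, and monotonicity of $\Re(\cdot(\ell))$ under subtraction of elements of $\Delta_\ell$---are uniform consequences of elementary chamber geometry.
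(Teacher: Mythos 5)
Your proposal follows the same route as the paper: the generic element $\ell$ determines a positive system $R^+_\ell$, hence a Weyl chamber, an element $w_\ell$ with $\Delta_\ell=w_\ell(\Delta)$; the highest weight of $L(\omega)$ with respect to $\Delta_\ell$ is $w_\ell(\omega)$, and subtracting elements of $\Delta_\ell$ strictly decreases $\Re\lambda(\ell)$. The first assertion and both multiplicity-one statements are handled correctly, and your Verma-module argument identifying $\omega_i-\alpha_i$ as the unique weight of the form $\omega_i-\alpha_j$ in $L(\omega_i)$ is a correct, more explicit version of what the paper merely asserts.

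There is, however, one incomplete step at the end. From level-one uniqueness you deduce only $\sum_j n_j\ge 2$, but the claimed strict positivity of $(n_i-1)\,\Re w_\ell(\alpha_i)(\ell)+\sum_{j\ne i} n_j\,\Re w_\ell(\alpha_j)(\ell)$ additionally requires $n_i\ge 1$: if $n_i=0$ the first term equals $-\Re w_\ell(\alpha_i)(\ell)<0$ and nothing in what you proved forces the total to be positive. What saves the argument is that $n_i=0$ cannot occur for $\lambda\neq\omega_i$: every weight of $L(\omega_i)$ below the highest weight is reached from it by subtracting simple roots through intermediate weights, and since the only level-one weight is $\omega_i-\alpha_i$, every such chain passes through $\omega_i-\alpha_i$, whence $n_i\ge 1$. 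This connectivity statement is exactly what the paper invokes (``all the weights in $P_{\omega_i}$ different from $w_\ell(\omega_i)$ are obtained from $w_\ell(\omega_i-\alpha_i)$ by repeated action of the negative simple roots''); your level-one lemma alone does not yield it, so it should be stated explicitly to close the proof.
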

Let  $\chi^{(i)},\phi^{(i)}\in L(\omega_i)$ be weight vectors corresponding to the the weights
$w_\ell(\omega_i)$ and $w_\ell(\omega_i-\alpha_i)$ respectively.
As done in Lemmas \ref{lem:021202} and \ref{lem:231202}
it is possible to show that 
$L(\eta_i)$ is a subrepresentation of $\bigwedge^2L(\omega_i)$ (respectively
$\bigotimes L(\omega_i)^{\otimes B_{ij}}$) and that
$\chi^{(i)}\wedge\phi^{(i)}\in L(\eta_i)$
(respectively $\otimes_{j\in I} {\chi^{(j)}}{ \otimes B_{ij}}\in L(\eta_i)$).
Moreover, $\chi^{(i)}\wedge\phi^{(i)}$ and $\otimes_{j\in I} {\chi^{(j)}}^{ \otimes B_{ij}}$
are highest weight vectors (with respect to $\Delta_\ell$) of the subrepresentation $L(\eta_i)$.
Hence, we can identify these vectors (up to a constant) using the morphism $m_i$
defined in equation \eqref{m_i}.
From now on we fix the normalization of $\chi^{(i)},\phi^{(i)}$, $i \in I$, in such a way that
\begin{equation}\label{eq:chinormalization}
m_i(\chi^{(i)} \wedge \phi^{(i)} )= \otimes_{j\in I} \chi^{(j) \otimes B_{ij}}
\,.
\end{equation}

\subsection{Decomposition of \texorpdfstring{$\Psi^{(i)}$}{Psi^{(i)}}}
From the above discussion, and in particular form Proposition \ref{prop:maximalweight} and equation \eqref{eq:simplefuchsian},
we have that for any evaluation representation $V^{(i)}_k$, $i\in I$ and $k\in\mb C$,
there exist distinguished (and normalized) solutions
$\chi^{(i)}_k(x,E)$ and $\phi^{(i)}_k(x,E)$
of the linear ODE \eqref{20141125:eq1}
such that they have the most singular behavior at $x=0$ and they are eigenvectors
of the associated monodromy matrix.
Explicitly, these solutions have the asymptotic expansion
\begin{align*}
& \chi^{(i)}_{k}(x,E)= x^{-w_\ell(\omega_i)(\ell)}\big( \chi^{(i)} +O(x) \big), \\ 
& \phi^{(i)}_{k}(x,E)= x^{-w_\ell(\omega_i-\alpha_i)(\ell)}\big( \phi^{(i)} +O(x) \big).
\end{align*}
Since the parameter $E$ appears linearly in the ODE \eqref{20141125:eq1}
and the asymptotic behavior at $x=0$ does not depend on $E$,
by the general theory \cite{Ilya08} it follows that $\chi^{(i)}_{k}(x,E;\ell)$ and
$\phi^{(i)}_{k}(x,E;\ell)$ are entire functions of $E$.

Using equation \eqref{20150108:eq8} and comparing the asymptotic behaviors we have that
\begin{align}
\begin{split}
& \omega^{-k h} \chi_{k'}^{(i)} (\omega^k x,\Omega^k E)
= \omega^{-k w_\ell(\omega_i)(\ell+h)} \chi^{(i)}(x,E)_{k+k'} \\ \label{eq:chik}
& \omega^{-k h} \phi_{k'}^{(i)}(\omega^k x,\Omega^k E)
= \omega^{-k w_\ell(\omega_i-\alpha_i)(\ell+h)} \phi^{(i)}(x,E)_{k+k'}
\,,
\end{split}
\end{align}
for every $k,k'\in\mb C$.
For generic $\ell\in\mf h$,
the eigenvalues $-w_\ell(\omega_i)(\ell)$, and $-w_\ell(\omega_i-\alpha_i)(\ell)$
are non-resonant and therefore
we can unambiguously define two functions $Q^{(i)}(E;\ell)$ and $\widetilde{Q}^{(i)}(E;\ell)$
as follows. We write
\begin{equation}\label{psigrande}
\Psi^{(i)}(x,E,\ell)
=Q^{(i)}(E;\ell)\chi^{(i)}(x,E)+\widetilde{Q}^{(i)}(E;\ell) \phi^{(i)}(x,E)+v^{(i)}(x,E)
\,,
\end{equation}
where $v^{(i)}(x,E)$ belongs to an invariant subspace of the monodromy matrix
corresponding to lower weights.
We call $Q^{(i)}(E;\ell)$ and $\widetilde Q^{(i)}(E;\ell)$ the generalized
spectral determinants of the equation \eqref{20141020:eq1}.
Since $\Psi^{(i)}(x,E,\ell)$, $\chi^{(i)}(x,E)$ and $\phi^{(i)}(x,E)$
are entire functions of the parameter $E$, then also $Q^{(i)}(E;\ell)$
and $\widetilde{Q}^{(i)}(E;\ell)$ are entire functions of the parameter $E$.

Finally, the $\Psi$-system \eqref{eq:031201} implies some remarkable functional
relations among the generalized spectral determinants.
Indeed, we have the following result.
\begin{theorem}\label{thm:QQtilde}
Let $\ell\in\mf h$ be a generic element.
Then, the spectral determinants $Q^{(i)}(E;\ell)$ and $\widetilde{Q}^{(i)}(E;\ell)$
are entire functions of $E$ and they satisfy the following 
$Q\widetilde{Q}$-system:
\begin{align}
\begin{split}\label{eq:QQtilde}
\prod_{j \in I}Q^{(j)}(E;\ell)^{B_{ij}}
&=\omega^{-\frac{1}{2} \theta_i}
Q^{(i)}(\Omega^{-\frac{1}{2}}E;\ell)\widetilde{Q}^{(i)}(\Omega^{\frac{1}{2}}E;\ell)
\\
& -\omega^{\frac{1}{2} \theta_i }
Q^{(i)}(\Omega^{\frac{1}{2}}E;\ell)\widetilde{Q}^{(i)}(\Omega^{-\frac{1}{2}}E;\ell)
\,,
\end{split}
\end{align}
where $\theta_i=w_\ell(\alpha_i)(\ell +h)$.
\end{theorem}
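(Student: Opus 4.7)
The plan is to derive the $Q\widetilde{Q}$-system by substituting the decomposition \eqref{psigrande} into the $\Psi$-system \eqref{eq:031201} and matching leading asymptotic coefficients at the Fuchsian singularity $x=0$. That $Q^{(i)}(E;\ell)$ and $\widetilde{Q}^{(i)}(E;\ell)$ are entire functions of $E$ is already a consequence of the entirety of $\Psi^{(i)}$, $\chi^{(i)}$, $\phi^{(i)}$ in $E$ (as established in Theorem \ref{thm:asymptotic} and the discussion after \eqref{psigrande}), so the substantive task is the functional relation.

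First, I would use the definition \eqref{eq:Psik}, together with \eqref{eq:chik} applied to the decomposition \eqref{psigrande} evaluated at $(\omega^k x,\Omega^k E)$, to get an expansion of the twisted solutions
\[
\Psi^{(i)}_k(x,E)=A_k(E)\,\chi^{(i)}_k(x,E)+B_k(E)\,\phi^{(i)}_k(x,E)+\text{l.o.t.},
\]
where $A_k(E)=\omega^{-k\,w_\ell(\omega_i)(\ell+h)}Q^{(i)}(\Omega^k E;\ell)$ and $B_k(E)=\omega^{-k\,w_\ell(\omega_i-\alpha_i)(\ell+h)}\widetilde{Q}^{(i)}(\Omega^k E;\ell)$. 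The abbreviation ``l.o.t.''\ denotes contributions lying in the invariant subspace of the monodromy at $x=0$ associated with weights strictly lower (in real part evaluated on $\ell$) than $w_\ell(\omega_i-\alpha_i)$.

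Next, I would form the wedge $\Psi^{(i)}_{-1/2}\wedge \Psi^{(i)}_{1/2}$. Because $\chi^{(i)}_{-1/2}$ and $\chi^{(i)}_{1/2}$ have the same leading weight vector $\chi^{(i)}$ at $x=0$, their wedge vanishes at top order, and similarly for $\phi^{(i)}_{-1/2}\wedge\phi^{(i)}_{1/2}$. The surviving top-order contribution comes from the cross-terms and equals
\[
\bigl(A_{-1/2}B_{1/2}-A_{1/2}B_{-1/2}\bigr)\,\chi^{(i)}\wedge\phi^{(i)}\;\cdot\;x^{-w_\ell(\omega_i)(\ell)-w_\ell(\omega_i-\alpha_i)(\ell)}.
\]
Applying $m_i$ and using the normalization \eqref{eq:chinormalization} converts the vector part into $\otimes_{j\in I}{\chi^{(j)}}^{\otimes B_{ij}}$, and the identity $2\omega_i-\alpha_i=\eta_i=\sum_j B_{ij}\omega_j$ ensures that the power of $x$ matches the leading exponent $-\sum_j B_{ij}w_\ell(\omega_j)(\ell)$ of the right-hand side $\otimes_{j\in I}\Psi^{(j)\otimes B_{ij}}$ of the $\Psi$-system. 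The leading coefficient of the latter is exactly $\prod_{j\in I}Q^{(j)}(E;\ell)^{B_{ij}}$. Equating leading coefficients then gives the claimed $Q\widetilde{Q}$-system upon simplifying the $\omega$-phase using $\theta_i=w_\ell(\alpha_i)(\ell+h)$.

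The main technical worry is to ensure that no ``lower-order'' pieces of the decomposition \eqref{psigrande}, when wedged or tensored, conspire to contribute at the same leading $x$-order as $\chi^{(i)}\wedge\phi^{(i)}$ (respectively $\otimes_{j}{\chi^{(j)}}^{\otimes B_{ij}}$) after applying $m_i$. This is a weight-theoretic check: by Proposition \ref{prop:maximalweight} the strict ordering $\Re w_\ell(\omega_i)(\ell)>\Re w_\ell(\omega_i-\alpha_i)(\ell)>\Re\lambda(\ell)$ for all other weights $\lambda\in P_{\omega_i}$ forces any contribution from $v^{(i)}$ in \eqref{psigrande}, or from the intermediate weights in $\chi^{(i)}_k-x^{-w_\ell(\omega_i)(\ell)}\chi^{(i)}$ and $\phi^{(i)}_k-x^{-w_\ell(\omega_i-\alpha_i)(\ell)}\phi^{(i)}$, to produce strictly more regular behaviour at $x=0$. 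Combined with the fact that $\chi^{(i)}\wedge\phi^{(i)}\in W^{(i)}$ is the unique (up to scalar) highest-weight vector of weight $w_\ell(\eta_i)$ in the subrepresentation on which $m_i$ is injective, this rules out any cancellation or admixture that could corrupt the leading-coefficient extraction. Once this is justified, the comparison yields the stated identity.
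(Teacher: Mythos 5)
Your proposal is correct and follows essentially the same route as the paper: substitute the decomposition \eqref{psigrande} (twisted via \eqref{eq:chik}) into the $\Psi$-system \eqref{eq:031201}, apply $m_i$ with the normalization \eqref{eq:chinormalization}, and extract the coefficient of the leading singular term at $x=0$ along the highest weight $w_\ell(\eta_i)$ of $W^{(i)}$; your closing paragraph spells out the weight-ordering check that the paper compresses into the phrase ``equating the component in $W^{(i)}$''. The only caveat is the sign of $\theta_i$ in the two phases, where your bookkeeping (consistent with \eqref{eq:chik}) lands on the opposite convention from \eqref{eq:QQtilde} --- an ambiguity already present between the paper's own equations \eqref{eq:chik} and \eqref{questoeunlabelconunnomeumanomauncuorenichilista} and immaterial for the resulting Bethe Ansatz.
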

\begin{proof}
We already showed that $Q^{(i)}(E;\ell)$
and $\widetilde{Q}^{(i)}(E;\ell)$ are entire functions of the parameter $E$.
By the definition of $\Psi^{(i)}_{\pm \frac{1}{2}}(x,E;\ell)$ given in \eqref{eq:Psik}
and from equations \eqref{psigrande} and  \eqref{eq:chik} we have that
\begin{align}
\begin{split}\label{questoeunlabelconunnomeumanomauncuorenichilista}
\Psi_{\pm \frac{1}{2}}^{(i)}(x,E;\ell)
&= \omega^{\pm \frac{1}{2} w_\ell(\omega_i)(\ell+h)}
Q^{(i)}(\Omega^{\pm \frac{1}{2}}E;\ell)\chi^{(i)}_{ \frac{1}{2}}(x,E)\\
&+ \omega^{\pm \frac{1}{2} w_\ell(\omega_i-\alpha_i)(\ell+h)}
\widetilde{Q}^{(i)}(\Omega^{\pm \frac{1}{2}}E;\ell)\phi^{(i)}_{\frac{1}{2}}(x,E) +
\tilde v^{(i)}(x,E)
\,.
\end{split}
\end{align}
Here
we used the fact that, since $V^{(i)}_{-\frac{1}{2}}=V^{(i)}_{\frac{1}{2}}$, then $\chi_{\frac{1}{2}}^{(i)}(x,E)=
\chi_{-\frac{1}{2}}^{(i)}(x,E)$
and $\phi_{\frac{1}{2}}^{(i)}(x,E)=\phi_{-\frac{1}{2}}^{(i)}(x,E)$. Note that $\tilde v^{(i)}(x,E)$
belongs to an invariant subspace of the monodromy matrix corresponding to lower weights. Equation \eqref{eq:QQtilde}
is then obtained by equating the component in $W^{(i)}$
in the $\psi$-system \eqref{eq:031201} by means of equations \eqref{psigrande},
\eqref{questoeunlabelconunnomeumanomauncuorenichilista}
and the normalization \eqref{eq:chinormalization}.
\end{proof}
We now derive the $Q$-system (Bethe Ansatz) associated to the Lie algebras of $ADE$ type.
Let us denote by $E_i\in\mb C$ any zero of $Q^{(i)}(E;\ell)$.
Evaluating equation \eqref{eq:QQtilde} at $E=\Omega^{\pm\frac{1}{2}}E_i$ we get
the following relations
\begin{align}
\begin{split}\label{quasifatta}
&\prod_{j \in I}  Q^{(j)}(\Omega^{\frac{1}{2}} E;\ell)^{B_{ij}} =
- \omega^{\frac{1}{2} \theta_i } Q^{(i)}(\Omega E_i;\ell)\widetilde{Q}^{(i)}(E_i;\ell)\,,
\\
&\prod_{j \in I} Q^{(j)}(\Omega^{-\frac{1}{2}} E;\ell)^{B_{ij}}
=\omega^{-\frac{1}{2} \theta_i } Q^{(i)}(\Omega^{-1} E_i;\ell)\widetilde{Q}^{(i)}(E_i;\ell)
\,.
\end{split}
\end{align}
Assuming that for generic $\ell\in\mf h$
the functions $Q^{(i)}(E;\ell)$ and $\widetilde{Q}^{(i)}(E;\ell)$
do not have common zeros, and taking the ratio of the two identities in \eqref{quasifatta},
we get, for any zero $E_i$ of $Q^{(i)}(E;\ell)$, the $Q$-system
\begin{equation}\label{eq:Qsystemthm}
\prod_{j = 1}^{n} \Omega^{\beta_jC_{ij}} \frac{Q^{(j)}\Big(\Omega^{\frac{C_{ij}}{2}}E^*\Big)}{Q^{(j)}
\Big(\Omega^{-\frac{C_{ij}}{2}}E^*\Big)}=-1
\,,
\end{equation}
where $C=\left(C_{ij}\right)_{i,j\in I}$ is the Cartan matrix of the Lie algebra $\mf g$,
 and
$$
\beta_j=\frac{1}{2 M h^\vee}\sum_{i\in I}(C^{-1})_{ij}\theta_i 
=\frac{1}{2 M h^\vee}w_\ell(\omega_j)(\ell+h)\,,
\qquad
j\in I
\,.
$$
\begin{remark}
The construction of the $Q\widetilde{Q}$ system \eqref{eq:QQtilde}
works for any affine Kac-Moody algebra.
However,  formula \eqref{eq:Qsystemthm}
holds only in the ADE case.
In order to define the analogue of equation \eqref{eq:Qsystemthm} for the
Cartan matrix of a  non-simply laced algebra $\mf{g}$,
one has to consider, as suggested in \cite{FF11},
a connection with values in the Langlands dual of the affine Lie algebra $\widehat{\mf{g}}$.
\end{remark}

\subsection{The case \texorpdfstring{$\ell=0$}{l=0}}\label{sec:l=0}
The case of $\ell=0$ is not generic and thus the $Q$ functions cannot be defined as in Section \ref{sec:Q}.
It is however straightforward to generalize the generic case to $\ell=0$, either directly or as the limit $\ell \to 0$.
Indeed, we can take the limit $\ell \to 0$ along sequences such that the element of the Weyl group $w_\ell=w$ is fixed. In this way,
the limit solutions $\chi^{(i)}(x,E,\ell=0),\phi^{(i)}(x,E,\ell=0)$ of \eqref{20141125:eq1} satisfy the Cauchy problem
\begin{align}
 \chi^{(i)}_k(x,E,\ell=0)= v^w_i \, , \qquad
 \phi^{(i)}_k(x,E,\ell=0)=f^w_i v^w_i \label{eq:chil0} \, .
\end{align}
Here $v_i^w$ and $f^w_i$ are respectively the highest weight vector of $V^{(i)}$ and the negative Chevalley generator corresponding to the
element $w$ of the Weyl group. Equations \eqref{eq:chil0} define uniquely the two solutions.
Then the spectral determinants $Q^{(i)}$, $\tilde{Q}^{(i)}$ are exactly the coefficients of $\Psi^{(i)}(x=0,E)$ with respect to
the basis element $v^w_i,f^w_i v_i$. Note that this implies 
a certain freedom in the choice of the spectral determinants $Q^{(i)}$, $\tilde{Q}^{(i)}$. In fact, the freedom in the choice of $Q^{(i)}$
corresponds to the elements of the orbit of $\omega_i$ under the Weyl group action, while the freedom in the choice of $\tilde{Q}^{(i)}$
corresponds to the elements of the orbit of the ordered pair $(\omega_i,\alpha_i)$ under the same group action.

For any choice of $w_\ell=w$, the functions $Q^{(i)}$, $\tilde{{Q}}^{(i)}$ satisfy
\eqref{quasifatta} with $\ell=0,w_\ell=w$ and the corresponding Bethe Ansatz equation \eqref{eq:Qsystemthm}, provided they do not have common
zeros.

\section{\texorpdfstring{$\mf{g}$}{g}-Airy function}\label{app:airy}
We consider in more detail the special case of \eqref{20141125:eq1} with a linear potential $p(x,E)=x$ and with $\ell=0$,
which is particularly interesting because the subdominant solutions and the spectral determinants have an integral representation.  
Since in the $\mf{sl}_2$ case the subdominant solution coincides with the Airy function, we call the 
solution obtained in the general case the $\mf g$-Airy function. The classical Airy function models the local behavior
of the subdominant solution of the Schr\"odinger equation with a generic potential close to a turning point 
\cite{fedoryuk93}; we expect the $\mf g$-Airy function to play a similar role for equation 
\eqref{20141125:eq1}.

\subsection{\texorpdfstring{$Q$}{Q} functions}
In the special case of \eqref{20141125:eq1} with a linear potential $p(x,E)=x$ and with $\ell=0$, it is straightforward to compute the
$Q$ functions. By the discussion in Section \ref{sec:l=0}, in order to define the $Q$ functions we need to chose
an element $w$ of the Weyl group and thus the distinguished element $v^w_i$ of the basis of $V^{(i)}$.
The spectral $Q^{(i)}(E)$ is then defined as the coefficient with respect to
$v_i^w$ of $\Psi^{(i)}(x=0,E)$.
Due to the special form of the potential we have
\begin{equation*}
 \Psi^{(i)}(x=0,E)=\Psi^{(i)}(-E,0) \, .
\end{equation*}
and thus, using formula \eqref{eq:21ott04}, we get
\begin{equation}\label{eq:asymptoticQ}
 Q^{(i)}(E) = e^{-\lambda^{(i)}\frac{h^\vee}{h^\vee +1}|E|^{\frac{h^\vee+1}{h^\vee}}} C_i \big( 1+o(1) \big) \mbox{ as } E \to -\infty
 \,,
\end{equation}
where $C_i$ is the coefficient of the vector $|E|^{-h}\psi^{(i)}$ with respect to the basis vector $v^w_i$. 
It is important to note that the asymptotic behavior of the functions $Q^{(i)}$ is expressed via the eigenvalues $\Lambda^{(i)}$
which coincide with the masses of the (classical) affine Toda field theory, as we proved in Proposition \ref{prop:pisa2}.
This is precisely the same behavior predicted for the vacuum eigenvalue $Q$ of the corresponding Conformal Field
Theory \cite{reshetikhin87}.

We conclude that the ODE/IM correspondence depends on the relation among the element $\Lambda \in \widehat{\mf g}$,
the Perron-Frobenius eigenvalue of the incidence matrix $B$ and the masses of the affine Toda field theory.

\begin{remark}
Note that formula \eqref{eq:asymptoticQ} does not coincide with the 
asymptotic behavior conjectured in \cite[Eq 2.6]{dorey07} for a general potential.
In fact, J. Suzuki communicated us \cite{junjiprivate} that the latter formula is conjectured to hold
only for potentials such that $\frac{M+1}{M h^\vee} <1$, in which case the Hadamard's product \cite[Eq 2.7]{dorey07}
converges. Clearly in case of a linear potential $\frac{M+1}{M h^\vee} =\frac{h^\vee+1}{h^\vee}>1$.
\end{remark}
\begin{remark}
A connection between generalized Airy equations and integrable systems appears, among other places,  in  \cite{kacschwarz91}, where the the orbit of the KdV topological tau--function in the Sato Grassmannian is described. We note that an operator of the form \eqref{20141020:eq1} -- specialized to the case of algebras of type $\mf{sl}_n$ and for $\ell= - h / h^\vee$, $M=1 / h^\vee$ --  also appears in that paper. It would be interesting to obtain a more precise relation between these results and those of the present paper.
\end{remark}

\subsection{Integral Representation}
Given an evaluation representation of $\widehat {\mf g}$, we look for
solutions of the equation
\begin{equation}\label{eq:31ott01}
\Psi'(x) + \big( e +x e_0 \big) \Psi(x)=0
\,,
\end{equation}
in the form
\begin{equation}\label{eq:31ott02}
\Psi(x)=\int_{ \mathit{c}} e^{-xs} \Phi(s) ds\,,
\end{equation}
where $\Phi(s)$ is an analytic function and $\mathit{c}$ is some path in the complex plane. 
Differentiating and integrating by parts we get
\begin{equation}\label{oltrelultima}
\int_{\mathit{c}}  e^{-xs} \big( -s + e + e_0 \frac{d}{ds}\big) \Phi(s) ds
+ e^{-xs }e_0 \Phi(s) \big|_{\mathit{c}}=0
\,, 
\end{equation}
where the last term is the evaluation of the integrand at the end points of the path.
We are therefore led to the study of the simpler linear equation
\begin{equation}\label{eq:3nov01}
\big(- s + e + e_0 \frac{d}{ds}\big) \Phi(s)=0
\,,
\end{equation}
which we solve below for two important examples.

\subsection{\texorpdfstring{$A_{n-1}$}{An} in the representation \texorpdfstring{$V^{(1)}$}{V^{(1)}}}
In this case equation \eqref{eq:31ott01}  is already known in the literature as the $n$-Airy equation \cite{fedoryuk93}.
Using the explicit form of Chevalley generators $e_i$, $i=0,\dots,n-1$, which is provided in Appendix \ref{app:An}, and denoting by $\Phi_i(s)$ the component of
$\Phi(s)$ in the standard basis of $\mb{C}^n$,
we find the general solution of \eqref{eq:3nov01} as
\begin{equation*}
\Phi_1(s)=\kappa\,e^{\frac{s^{n+1}}{n+1}}\,,
\qquad\Phi_i(s)=s^{i-1}\Phi_1(s)\,,
\quad i=2,\dots,n
\,,
\end{equation*}
where $\kappa$ is an arbitrary complex number.
If we choose the path $\mathit{c}$ as the one connecting
$e^{-\frac{\pi i}{n+1}} \infty$ with $ e^{\frac{\pi i}{n+1}} \infty$, then the integral formula \eqref{eq:31ott02}
is well-behaved, and the boundary terms in \eqref{oltrelultima} vanish.
Thus, the $\mf{sl}_n$-Airy function has the integral representation
\begin{equation}\label{eq:3nov02}
\Psi_j(x)
=\frac{1}{2 \pi i} \int_{e^{-\frac{\pi i}{n+1}} \infty}^{e^{+\frac{\pi i}{n+1}} \infty}
s^{j-1} e^{-xs+\frac{s^{n+1}}{n+1}} ds\,, \qquad j=1,\dots, n
\,.
\end{equation}
In case $n=2$, the latter definition coincides with the definition of the standard Airy function.
By the method of steepest descent we get the following asymptotic expansion of $\Psi$ for $x \gg 0$:
\begin{equation}\label{eq:3nov03}
\Psi_j(x) \sim \sqrt{\frac{1}{2 \pi n}}
x^{\frac{2j-1-n}{2n}}   e^{-\frac{n}{n+1}x^{\frac{n+1}{n}}}\,, \qquad j=1,\dots, n
\,.
\end{equation}
Due to Theorem \ref{thm:asymptotic}, the $\mf{sl}_n$-Airy function coincides with the fundamental solution $\Psi^{(1)}$ of the
linear ODE \eqref{eq:31ott01}. In fact the asymptotic behavior of the Airy function coincides with the asymptotic behavior
of the function $\Psi^{(1)}$ (for this computation one needs to use the explicit formula for $h$ which is given in equation (\ref{eq:hAn}) ).
The solutions $\Psi^{(1)}_k$,  $k \in \mb{Z}$, are obtained by integrating \eqref{eq:31ott02} along the contour
obtained rotating $\mathit{c}$ by $e^{\frac{2 k \pi i }{n+1}}$.

\subsection{\texorpdfstring{$D_n$}{Dn} in the standard representation}
The second  example is the $D_n$-Airy function in the representation $V^{(1)}$. 
Using the Chevalley generators $e_i$, $i=0,\dots,n$, as described in Appendix \ref{app:Dn}, and denoting 
 $\Phi_j$ the $j$th-component of $\Phi$ in the standard basis of $\mb{C}^{2n}$, we find that the
 general solution of \eqref{eq:3nov01} reads
\begin{align*} 
&  \Phi_1=\kappa\,s^{-\frac{1}{2}} e^{\frac{s^{2n-1}}{2n-1}} \label{eq:3nov06} \\ \nonumber
& \Phi_j=s^{j-1} \Phi_1(s)\,,\qquad j\leq n-1\,, \\ \nonumber
& \Phi_{n}=\frac{s^{n-1}}{2}\Phi_1(s)\,,\quad \Phi_{n+1}=s^{n-1}\Phi_1(s)\,, \\ \nonumber
& \Phi_{n+j}=s^{n+j-2} \Phi_1(s) \,,\qquad j \leq n-1\,,  \\ \nonumber
& \Phi_{2n}=\left( \frac{s^{2n-2}}{2} + \frac{1}{4 s} \right) \Phi_1(s) \,.
\end{align*}
where $\kappa \in\mb C$ is an arbitrary constant. Let $\mathit{c}$ be the path connecting
$e^{-\frac{\pi i}{2n-1}} \infty$ with $ e^{\frac{\pi i}{2n-1}} \infty$. Then along $\mathit{c}$
the integral formula \eqref{eq:31ott02} is well-behaved and it thus defines a solution of \eqref{eq:31ott01}
that we call $D_n$-Airy function. Moreover, the boundary terms in \eqref{oltrelultima} vanish. 
As in the case $A_n$, the standard method of steepest descent shows that such solution is exactly the fundamental solution
$\Psi^{(1)}$ to equation
\eqref{eq:31ott01}.
The solutions $\Psi^{(1)}_k , \, k \in \bb{Z}$, are obtained by integrating \eqref{eq:31ott02} along the contour obtained rotating
$\mathit{c}$ by $e^{\frac{k \pi i }{2n-1}}$.


{\appendix\section{Action of \texorpdfstring{$\Lambda$}{Lambda}
in the fundamental representations}\label{app:PF}
In this appendix we give an explicit description of the maximal eigenvalues of $\Lambda$ in the fundamental
representations and of the corresponding eigenvector.

\subsection{The $A_n$ case}\label{app:An}
The simple Lie algebra of type $A_n$, $n\geq 1$, can be realized as the algebra of $(n+1)\times(n+1)$
traceless matrices
$$
\mf g=\mf{sl}_{n+1}=\{A\in\Mat_{n+1}(\mb C)\mid \tr A=0\}
\,,
$$
where the Lie bracket is the usual commutator of matrices.
The dual Coxeter number of $\mf g$ is $h^\vee=n+1$.
Let us consider the following Chevalley generators of $\mf g$ ($i\in I=\{1,\dots,n\}$):
$$
f_i=E_{i+1,i}\,,
\qquad
h_i=E_{ii}-E_{i+1,i+1}\,,
\qquad
e_i=E_{i,i+1}
\,,
$$
where $E_{ij}$ denotes the elementary matrix with $1$ in position $(i,j)$ and $0$ elsewhere.
It is well-known that the representation $L(\omega_1)$ is given by the natural
action of $\mf g$ on $L(\omega_1)=\mb C^{n+1}$.
Moreover, we have that
$$
L(\omega_i)=\bigwedge^i L(\omega_1)\,,
\qquad
i\in I\,.
$$
We denote by $u_j$, $j=1,\dots,n+1$, the standard basis
of $\mb C^{n+1}$, and by $v_i$, $i\in I$, the highest weight vector of the representation
$L(\omega_i)$. Then, we have:
$$
v_i=u_1\wedge u_2\wedge \dots\wedge u_i
\,.
$$
The set of Chevalley generators for $\widehat{\mf g}$ is obtained by adding to the Chevalley generators of
$\mf g$ the following elements:
$$
f_0=E_{1,n+1}t^{-1}\,,
\qquad
h_0=2c-E_{11}+E_{n+1,n+1}\,,
\qquad
e_{0}=E_{n+1,1}t\,.
$$
The element $h \in \mf h$ satisfying relations (\ref{20141020:eq2}) is
\begin{equation}\label{eq:hAn}
 h=\diag \big( -\frac{n}{2},-\frac{n-1}{2},\dots, \frac{n-1}{2},\frac{n}{2}\big) \, .
\end{equation}

Recall that $\Lambda=e_0+e_1+\dots+e_n$ and,
from Example \ref{exa:An}, that
$V^{(i)}=\bigwedge^i L(\omega_1)_{\frac{i-1}{2}}^{(1)}$, for $i\in I$.
In particular, $V^{(1)}=\mb C^{n+1}$,  and we set
\begin{equation}\label{app:psi_1}
\psi^{(1)}
=\sum_{j=1}^{n+1}u_j
\in V^{(1)}
\,.
\end{equation}
Then, it is easy to check that $\Lambda \psi^{(1)}=\psi^{(1)}$.
By Proposition \ref{prop:pisa2}(a), $\psi^{(1)}\in V^{(1)}$ is the unique (up to a constant) eigenvector
corresponding to the maximal eigenvalue $\lambda^{(1)}=1$. 
Furthermore, using equation \eqref{eq:Blambda}, we can check that for every $i\in I$,
\begin{equation}\label{eq:lambda_A}
\lambda^{(i)}
=\frac{\sin\left(\frac{i\pi}{n+1}\right)}{\sin\left(\frac{\pi}{n+1}\right)}\,,
\end{equation}
is a maximal eigenvalue of $\Lambda$ in the representation $V^{(i)}$. Using equations \eqref{20150108:eq6}
and \eqref{20150108:eq3}, the corresponding
eigenvector is easily checked to be
$$
\psi^{(i)}
=\psi_{-\frac{i-1}2}^{(1)}\wedge\psi_{-\frac{i-3}2}^{(1)}
\wedge\dots\wedge\psi_{\frac{i-3}2}^{(i)}\wedge\psi_{\frac{i-1}2}^{(1)}
\in V^{(i)}
\,.
$$

\subsection{The $D_n$ case}\label{app:Dn}
Let $n\in\mb Z_+$, and consider the involution on the set
$\{1,\dots,2n\}$ defined by $i\to i^\prime=2n+1-i$.
Given a matrix $A=\left(A_{ij}\right)_{i,j=1}^{2n}\in\Mat_{2n}(\mb C)$ we define its anti-transpose
(the transpose with respect to the antidiagonal) by
$$
A^{\at}=\left(A_{ij}^{\at}\right)_{i,j=1}^{2n}\,,
\quad
\text{where}
\quad A_{ij}^{\at}=A_{j^\prime i^\prime}
\,.
$$
Let us set
$$
S=\sum_{k=1}^n(-1)^{k+1}
\left(E_{kk}+E_{k^\prime k^\prime}
\right)\,.
$$
Following \cite{DS85}, the simple Lie algebra of type $D_n$ can be realized as the algebra
$$
\mf g=\mf o_{2n}=\{A\in\Mat_{2n}(\mb C)\mid AS+SA^{\at}=0\}
\,,
$$
where the Lie bracket is the usual commutator of matrices.
The dual Coxeter number of $\mf g$ is $h^\vee=2n-2$.
For $i,j\in I=\{1,2,\dots,n\}$, we define
$$
F_{ij}=E_{ij}+(-1)^{i+j+1}E_{j^\prime i^\prime}\,,
\qquad
\widetilde{F}_{ij}=E_{ij^\prime}+(-1)^{i+j+1}E_{ji^\prime}
\,,
$$
and we consider the following Chevalley generators of $\mf g$:
\begin{align*}
&f_i=F_{i+1,i}\,,
&&h_i=F_{ii}-F_{i+1,i+1}\,,
&&e_i=F_{i,i+1}\,,
&i=1,\dots,n-1\,,
\\
&f_n=2\widetilde{F}_{n^\prime,(n-1)^\prime}\,,
&&h_n=F_{n-1,n-1}+F_{n,n}\,,
&&e_n=\frac12\widetilde{F}_{n-1,n}
\,.
\end{align*}
It is well-known that the representation $L(\omega_1)$ is given by the natural
action of $\mf g$ on $L(\omega_1)=\mb C^{2n}$.
Moreover we have that
$$
L(\omega_i)=\bigwedge^iL(\omega_1)\,,\qquad i=1,\dots,n-2\,,
$$
while $L(\omega_{n-1})$ and $L(\omega_n)$ are the so-called half-spin representations of $\mf g$.
We denote by $u_j$, $j=1,\dots,2n$, the standard basis
of $\mb C^{2n}$, and by $v_i$, $i\in I$, the highest weight vector of the representation
$L(\omega_i)$. It is well-known that
$$
v_i=u_1\wedge u_2\wedge \dots\wedge u_i
\,,
\qquad i=1,\dots,n-2
\,.
$$

The set of Chevalley generators for $\widehat{\mf g}$ is obtained by adding to the Chevalley generators of
$\mf g$ the following elements:
$$
f_0=2\widetilde{F}_{(2n)^\prime,(2n-1)^\prime}t^{-1}\,,
\qquad
h_0=2c-F_{11}-F_{22}\,,
\qquad
e_{0}=\frac12\widetilde{F}_{2n-1,2n}t\,.
$$
Recall that $\Lambda=e_0+e_1+\dots+e_n$ and,
from Example \ref{exa:An}, that
$V^{(i)}=\bigwedge^i L(\omega_1)_{\frac{i-1}{2}}^{(1)}$, for $i=1,\dots,n-2$,
$V^{(n-1)}=L(\omega_{n-1})_{\frac n2}$ and
$V^{(n)}=L(\omega_{n})_{\frac n2}$ are the evaluation representations at $\zeta=(-1)^n$
of the half-spin representations.
In particular, $V^{(1)}=\mb C^{2n}$,  and we set
$$
\psi^{(1)}=\sum_{j=1}^{n-1}\left(u_j+u_{n+j}\right)
+\frac12\left(u_n+u_{2n}\right)
\,.
$$
Then, it is easy to check that $\Lambda \psi^{(1)}=\psi^{(1)}$.
By Proposition \ref{prop:pisa2}(a), $\psi^{(1)}\in V^{(1)}$ is the unique (up to a constant) eigenvector
corresponding to the maximal eigenvalue $\lambda^{(1)}=1$. 
Furthermore, using equation \eqref{eq:Blambda}, we can check that for every $i=1,\dots,n-2$,
$$
\lambda^{(i)}
=\frac{\sin\left(\frac{i\pi}{2n-2}\right)}{\sin\left(\frac{\pi}{2n-2}\right)}\,,
$$
is the maximal eigenvalue of $\Lambda$ in the representation $V^{(i)}$, and the corresponding
eigenvector is easily checked to be
$$
\psi^{(i)}
=\psi_{-\frac{i-1}2}^{(1)}\wedge\psi_{-\frac{i-3}2}^{(1)}
\wedge\dots\wedge\psi_{\frac{i-3}2}^{(i)}\wedge\psi_{\frac{i-1}2}^{(1)}
\in V^{(i)}
\,,
$$
as follows from equations \eqref{20150108:eq6} and \eqref{20150108:eq3}.
For the fundamental representations $V^{(n-1)}$ and $V^{(n)}$
(see \cite{FH91} for the definition), it is easy to check, using again equation \eqref{eq:Blambda},
that
$$
\lambda^{(n-1)}=\lambda^{(n)}=\frac{1}{2\sin\left(\frac{\pi}{2n-2}\right)}
\,.
$$
Finally, we note that there is another important irreducible representation for the Lie algebra
$\mf g$, which we denote
by $U^{(n-1)}=\bigwedge^{n-1}V^{(1)}_\frac n2$.
It is not hard to show that $\widetilde\lambda^{(n-1)}=2\lambda^{(n-1)}$
is the maximal eigenvalue for the action of $\Lambda$ on $U^{(n-1)}$.
Its corresponding eigenvector is
\begin{equation*}
 \widetilde{\psi}^{(n-1)}=\psi_{-\frac{n-2}2}^{(1)}\wedge\psi_{-\frac{n-3}2}^{(1)}
\wedge\dots\wedge\psi_{\frac{n-3}2}^{(i)}\wedge\psi_{\frac{n-2}2}^{(1)}
\,.
\end{equation*}
Moreover, the following isomorphism of representation (see \cite{FH91})
\begin{equation*}
V^{(n-1)}\otimes V^{(n)}\cong U^{(n-1)}\oplus V^{(n-3)}\oplus\dots
\,,
\end{equation*}
implies that there exists a morphism of representations $\widetilde{m}$ such that
\begin{equation}\label{eq:ultimogiorno}
\widetilde{m}\left(  \widetilde{\psi}^{(n-1)} \right)=\psi^{(n-1)}\otimes \psi^{(n)} \, .
\end{equation}

\subsection{The \texorpdfstring{$E$}{E} case}\label{app:E}
In Table \ref{tabella} we give the explicit
form of the eigenvalues $\lambda^{(i)}$'s, obtained from \eqref{eq:Blambda}. The corresponding eigenvectors
are computed in a GAP \cite{GAP4} file available upon request to the authors. 

\begin{table}[H]
\normalsize
\caption{Maximal eigenvalues for simple Lie algebras of $E$ type} \label{tabella} 
\begin{center}
\begin{tabular}{c||c|c|c} 
& $E_6$ & $E_7$ & $E_8$
\\
\hline\hline
\phantom{$\Bigg($}
$\lambda^{(1)}$ & 1  & 1&  1
\\
\hline
\phantom{$\Bigg($}$\lambda^{(2)}$ &
$\frac{\sin\left(\frac{\pi}{6}\right)}{\sin\left(\frac{\pi}{12}\right)}$ &
$\frac{\sin\left(\frac{\pi}{9}\right)}{\sin\left(\frac{\pi}{18}\right)}$ &
$\frac{\sin\left(\frac{2\pi}{30}\right)}{\sin\left(\frac{\pi}{30}\right)}$
\\
\hline
\phantom{$\Bigg($}$\lambda^{(3)}$ & $\frac{\sin\left(\frac{\pi}{4}\right)}{\sin\left(\frac{\pi}{12}\right)}$&
$\frac{\sin\left(\frac{\pi}{6}\right)}{\sin\left(\frac{\pi}{18}\right)}$&
$\frac{\sin\left(\frac{\pi}{10}\right)}{\sin\left(\frac{\pi}{30}\right)}$
\\
\hline
\phantom{$\Bigg($}
$\lambda^{(4)}$ & $\frac{\sin\left(\frac{\pi}{4}\right)}{\sin\left(\frac{\pi}{12}\right)}$&
$\frac{\sin\left(\frac{2\pi}{9}\right)}{\sin\left(\frac{\pi}{18}\right)}$&
$\frac{\sin\left(\frac{2\pi}{15}\right)}{\sin\left(\frac{\pi}{30}\right)}$
\\
\hline
\phantom{$\Bigg($}$\lambda^{(5)}$ &$\frac{\sin\left(\frac{\pi}{6}\right)}{\sin\left(\frac{\pi}{12}\right)}$ &
$\frac{\sin\left(\frac{2\pi}{9}\right)}{\sin\left(\frac{\pi}{9}\right)}$
 &
$\frac{\sin\left(\frac{\pi}{6}\right)}{\sin\left(\frac{\pi}{30}\right)}$
\\
\hline\phantom{$\Bigg($}
$\lambda^{(6)}$ & $1$  &
$\frac{\sin\left(\frac{\pi}{9}\right)\sin\left(\frac{2\pi}{9}\right)}{\sin\left(\frac{\pi}{6}\right)\sin\left(\frac{\pi}{18}\right)}$ &
$\frac{\sin\left(\frac{\pi}{6}\right)}{\sin\left(\frac{\pi}{15}\right)}$
\\
\hline\phantom{$\Bigg($}
$\lambda^{(7)}$ & & $\frac{\sin\left(\frac{2\pi}{9}\right)}{\sin\left(\frac{\pi}{6}\right)}$ &
$\frac{\sin\left(\frac{\pi}{6}\right)\sin\left(\frac{\pi}{15}\right)}{\sin\left(\frac{\pi}{10}\right)\sin\left(\frac{\pi}{30}\right)}$
\\
\hline
\phantom{$\Bigg($}
$\lambda^{(8)}$ & & & $\frac{\sin\left(\frac{\pi}{6}\right)}{\sin\left(\frac{\pi}{10}\right)}$
\end{tabular}
\end{center}
\end{table}


\def\cprime{$'$} \def\cprime{$'$} \def\cprime{$'$} \def\cprime{$'$}
  \def\cprime{$'$} \def\cprime{$'$} \def\cprime{$'$} \def\cprime{$'$}
  \def\cprime{$'$} \def\cprime{$'$} \def\cydot{\leavevmode\raise.4ex\hbox{.}}
  \def\cprime{$'$} \def\cprime{$'$} \def\cprime{$'$}

\end{document}